\DeclareSIUnit\dBm{dBm}
\DeclareSIUnit\mph{mph}
\DeclareSIUnit\dBi{dBi}
\DeclareSIUnit\dB{dB}
\newtheorem{lemma}{Lemma}
\newtheorem{remark}{Remark}
\title{Toward Safe and Energy-Efficient 5G NR V2X Communications in Rural Environments}
\author{Zhanle Zhao, Son Dinh-Van,~\IEEEmembership{Member,~IEEE}, Yuen Kwan Mo, Siddartha Khastgir and \\Matthew  D. Higgins,~\IEEEmembership{Senior Member,~IEEE}
	\thanks{This work is supported in part by the State Scholarship Fund by China Scholarship Council (CSC) from the Ministry of Education of P. R. China; in part by the WMG Centre High Value Manufacturing Catapult, University of Warwick, Coventry, United Kingdom.}
	\thanks{The authors are with Warwick Manufacturing Group, University of Warwick, Coventry, United Kingdom. Emails: \{zhanle.zhao, son.v.dinh,
		tony.mo, s.khastgir.1, m.higgins\}@warwick.ac.uk.}}
\begin{document}
	\maketitle
	
\begin{abstract}
	Connected braking can reduce fatal collisions in connected and autonomous vehicles (CAVs) by using reliable, low-latency 5G New Radio (NR) links, especially NR Sidelink Vehicle-to-Everything (V2X). In rural areas, road side units are sparse and power-constrained, so energy efficiency must be considered alongside safety. This paper studies how three communication control factors including subcarrier spacing ($\mathrm{SCS}$), modulation and coding scheme ($\mathrm{MCS}$), and transmit power ($P_{\mathrm{t}}$) should be configured to balance safety and energy consumption in rural scenarios in light and heavy traffic scenarios. Safety is quantified by the packet receive ratio ($\mathrm{PRR}$) against the minimum communication distance $D_{\mathrm{comm}}$, defined as the distance that the vehicle travels during the transmission of the safety message. Results show that, under heavy traffic, increasing $P_{\mathrm{t}}$ and selecting a low-rate $\mathrm{MCS}$ at $\mathrm{SCS} = 30$ kHz sustains high $\mathrm{PRR}$ at $D_{\mathrm{comm}}$, albeit with higher energy cost. In light traffic, maintaining lower $P_\mathrm{t}$ with low $\mathrm{MCS}$ levels achieves a favorable reliability-energy trade-off while preserving acceptable $\mathrm{PRR}$ at $D_{\mathrm{comm}}$. These findings demonstrate the necessity of adaptive, energy-aware strategy to guarantee both safety and energy efficiency in rural V2X systems.
    \end{abstract}
	
	\begin{IEEEkeywords}
		CAV, connected braking, energy consumption, rural V2X, sidelink V2X, 5G New Radio.
	\end{IEEEkeywords}

    \maketitle
	
	\section{Introduction}
	Connected and autonomous vehicle (CAV) technologies are advancing rapidly in urban areas, whilst rural deployment continues to lag behind. One of the primary barriers is the limited interest from network operators and policy makers in investing in dedicated infrastructure for sparsely populated regions, where naturally, the return on investment is often less attractive. However, rural road users expect the same level of connectivity safety as those in urban settings \cite{Scenario}. This growing expectation raises an important research question about how to ensure a reliable and robust level of CAV safety in rural areas, where communication resources are limited. Rather than relying on the expansion of infrastructure from urban to rural, it becomes critical to explore how to maximize the efficiency of available rural resources to support Vehicle-to-Everything (V2X)-based safety systems in rural driving scenarios. In this context, it is vital to address the gap by exploring strategies for a resource-efficient V2X safety deployment, which is specifically tailored to the unique challenges of rural environments.
	
	Typically, CAVs rely on integrated sensors such as cameras, radar, and lidar to process braking decisions by inferring data from their surroundings. However, in complex scenarios such as roundabouts, road merges, and junctions, sensor accuracy can degrade due to obstructions or observation failures, exposing a key limitation of sensor-based systems \cite{Sensorfusion}. Studies show that safe braking distances in such environments are greatly improved through connectivity, which allows vehicles and drivers to be alerted to potential hidden hazards in advance \cite{Jason}. Cooperative V2X communication, including Vehicle-to-Vehicle (V2V) and Vehicle-to-Infrastructure (V2I), extends environmental awareness and enables earlier, safer braking decisions \cite{Humanjiangbrake}. Among the available V2X technologies, DSRC-based V2V and 5G NR V2X are the most widely adopted. Compared to DSRC V2V, 5G New-Radio (NR) V2X provides enhanced performance, offering higher data rates, lower communication latency, greater scalability, and more robust support for high-mobility traffic scenarios \cite{5GVSDSRCHOLOGRAMS}. These advantages enable 5G NR V2X to better meet the stringent safety requirements of CAV systems \cite{ToDSRCor5G}. Across all 5G NR V2X technologies, Sidelink V2X offers a more reliable alternative by supporting direct, low-latency vehicle communication without relying on fixed infrastructure \cite{sidelinkv2v}. This makes Sidelink V2X a robust solution for maintaining connectivity and enhancing reaction times, even in sparse or infrastructure-limited environments \cite{sidelinkv2x}.

    While V2X communication is primarily designed to improve traffic efficiency and road safety, the energy cost associated with V2X infrastructure has been considered from multiple perspectives, including sensing and perception workloads, onboard computing, and system-level power management strategies \cite{R2C1_1,R2C1_2}. Among these components, the radio frequency (RF) communication subsystem remains an important and controllable contributor to energy consumption, as transmission power, and retransmissions are directly influenced by communication configurations, channel and traffic conditions \cite{R2C1_3,R2C1_4}. The energy cost associated with maintaining reliable wireless links becomes especially critical in rural deployments. In such environments, road side infrastructure is often sparse and may rely on limited power sources. Unlike sensing or control operations that are event-driven, wireless communication requires continuous use of energy for data transmission and reception. Therefore, reducing the RF energy is essential to ensure long-term system sustainability and cost-effectiveness in rural areas. Communication control factors, including subcarrier spacing ($\mathrm{SCS}$), transmit power, and modulation and coding scheme ($\mathrm{MCS}$) play a direct role in balancing RF energy consumption and communication reliability. Wider $\mathrm{SCS}$ and higher $\mathrm{MCS}$ values enable higher data throughput and can reduce transmission time, which lowers RF energy consumption per bit when the channel is stable~\cite{SCSMCS}. However, these configurations are less robust in the presence of interference or fading, leading to potential retransmissions and energy waste, especially when link quality is poor. Increasing transmit power improves signal robustness, but adds to the overall energy cost, which is particularly undesirable in power-constrained rural infrastructure. When it comes to traffic conditions, they are typically classified as “\textit{light}” or “\textit{heavy}”. Light traffic, common on rural roads, often features low vehicle density and higher average speeds, where wider $\mathrm{SCS}$ and higher $\mathrm{MCS}$ can be exploited for efficient communication. In contrast, temporary congestion or mixed traffic conditions may demand more conservative configurations to preserve reliability, especially under variable channel conditions. Given the unique challenges of rural V2X communication, such as limited infrastructure, dynamic vehicle movement, and limited energy resources, it is essential to systematically analyse how communication control settings affect both energy efficiency and safety-critical performance. Such analysis supports the design of adaptive, energy-aware communication strategies that ensure reliable operation without overprovisioning power or compromising safety.
	
	In this paper, building on our previous work~\cite{ZhanleIVs}, we extend the investigation to more complex rural traffic scenarios. First, we analyse minimum safety-critical distances that the ego vehicle travels while the braking signal is being transmitted. This analysis is conducted under varying rural traffic conditions to characterise the system’s behaviour in dynamic environments. Second, by evaluating the packet reception rate ($\mathrm{PRR}$) with respect to minimum safety-critical distances, we analyse the impact of $\mathrm{SCS}$, $\mathrm{MCS}$, and \texttt{RSU} transmit power ($P_{\mathrm{t}}$) on safety-relevant reliability. This analysis performs design-space exploration over predefined 5G NR communication control configurations to identify parameter combinations that support reliable communication and maintain a safe critical distance under diverse rural V2X traffic conditions. Finally, we conduct an energy consumption analysis under both light and heavy traffic scenarios to determine energy-efficient communication strategies that maintain safety-critical performance while minimising energy cost, and provide guidance on configuration selection based on observed safety performance to energy trade-offs.

	\subsection{Related Research}
	Connected braking is considered as a safety feature used for assisting a driver, a remote operator, or an autonomous controller to stop the vehicle in advance through V2X communications. From the communication perspective, latency is a key safety metric, as it determines how quickly safety-relevant information is delivered. In practice, packet reliability and retransmissions further affect whether delivery occurs within the required time window. From the vehicle perspective, speed is particularly critical in rural environments, as it directly determines braking distance and the required safety margin. In our previous work \cite{ZhanleIVs}, we developed a metric, so-called \( D_{\mathrm{comm}} \), as the minimum safe critical distance that the ego vehicle traveled during the time that braking signals are transmitted. It considers a scenario that the ego vehicle, due to its sensors, is unaware of the sudden braking action of a vehicle ahead in a rural setting. As a result, a crash is imminent unless an avoidance maneuver or braking action is executed. The metric \(D_{\mathrm{comm}}\) implicitly incorporates these factors by representing the maximum distance over which safety-relevant information can be exchanged with sufficient reliability, thereby mapping communication performance into a spatial safety margin that is directly interpretable in terms of vehicle dynamics and safety requirements. In the context of connected braking, the ego vehicle communicates with the target vehicle or the \texttt{RSU} and triggers the brake based on delivered data packets. A larger packet delay leads to a slower connected braking, which increases \( D_{\mathrm{comm}} \) accordingly. In general, the increase of \( D_{\mathrm{comm}}\) is mainly caused by the increase of packet error rate ($\mathrm{PER}$), which is defined as the total number of received error packets against the total number of packet transmitted, and it is the main indicator of V2X propagation delay \cite{PERinWCL}. In \cite{ZhanleIVs}, we investigate the impact of transmit power on the V2V $\mathrm{PER}$ in a DSRC channel. An analysis on \( D_{\mathrm{comm}} \) is provided to demonstrate connected braking requirements within a $100$ \si{\meter} in a rural road-following scenario.
 
    The performance evaluation of communication control factors in V2X communication remains an open research challenge \cite{v2xchallenge}. In practice, the increasing complexity of road environments, driven by a higher number of road users, additional lanes, extended road lengths, and higher vehicle speeds make it even more difficult to maintain the required levels of safety and reliability. This challenge is closely linked to the allocation of communication control factors, such as transmit power, $\mathrm{SCS}$, and $\mathrm{MCS}$, which must be dynamically adjusted to match traffic conditions while avoiding excessive energy consumption. As the number of vehicles participating in the V2X network increases, the communication system must handle significantly higher data volumes, which can lead to network congestion, packet loss, and increased message delivery delays~\cite{DSRCplatooning}. These conditions directly impact the reliability of V2X connections, especially in environments with a lack of communication resource such as rural. In parallel, higher initial vehicle speeds further complicate communication demands. The authors of \cite{V2Xbraking} show that increased speeds require faster reaction times, which in turn demand a higher signal-to-noise ratio (SNR) to maintain low latency performance and increase the required \(D_{\mathrm{comm}}\). This adds pressure on the critical-safety system to deliver Basic Safety Messages (BSMs) with minimal delay~\cite{BSMs}. Collectively, these factors increase the likelihood of missed or delayed hazard alerts in complex, high-mobility scenarios, making the real-time monitoring and control of \(D_{\mathrm{comm}}\) more challenging. These challenges highlight the need for a proper setting of communication factors that can ensure timely and reliable message delivery under varying traffic and mobility conditions.

The impact of communication control factors on V2X communication has been investigated in a number of studies \cite{V2XRA-Pt,V2XRA-SCS,V2XRA3,V2Xsurvey,V2XRA-MCS-NR1,V2XRA-MCS-NR2}. The authors of \cite{V2XRA-Pt} have investigated the impact of \( {P}_{\mathrm{t}} \) on communication quality in urban scenario. It shows that a higher \( {P}_{\mathrm{t}} \) improves the received SNR, especially over long distances or in high interference environments. The research proposed in \cite{V2XRA-SCS} specifically evaluated the impact of $\mathrm{SCS}$ on V2X communication performance. Using a relatively high $\mathrm{SCS}$, such as $30$ \si{\kilo\hertz}, shortens the duration of each symbol, making it particularly suitable for low-latency applications. The study introduced in \cite{V2XRA3,V2Xsurvey} evaluates V2X communication performance under different $\mathrm{SCS}$ and $\mathrm{MCS}$ configurations in urban scenarios. Their findings indicate that a proper tuning of $\mathrm{SCS}$ and $\mathrm{MCS}$ enables the system to better adapt to varying traffic loads and channel conditions, thereby supporting more responsive and reliable data exchange, particularly under high vehicular densities. The research~\cite{V2XRA-MCS-NR1,V2XRA-MCS-NR2} focus on $\mathrm{MCS}$ within urban 5G NR V2X scenarios. It shows that appropriately adapting the $\mathrm{MCS}$ to match prevailing channel conditions can significantly reduce PER. Despite providing useful insights, the above works \cite{V2XRA-Pt,V2XRA-SCS,V2XRA3,V2Xsurvey,V2XRA-MCS-NR1,V2XRA-MCS-NR2} do not address a joint performance evaluation of \( {P}_{\mathrm{t}} \), $\mathrm{SCS}$, and $\mathrm{MCS}$. This gap is especially notable in the context of 5G NR V2X for rural traffic scenarios, where the unique channel characteristics demand a holistic approach to communication control. Another notable gap in \cite{V2XRA-Pt,V2XRA-SCS,V2XRA3,V2Xsurvey,V2XRA-MCS-NR1,V2XRA-MCS-NR2} is the lack of consideration of RF energy consumption in the performance evaluation of communication control factors. Although previous works have focused on improving reliability and latency, it has largely overlooked the trade-offs between performance and energy efficiency, which is an increasingly important concern, particularly in resource-constrained V2X deployments such as rural traffic scenarios.

Several studies have considered the energy aspect of V2X communication, recognising its importance in sustaining large-scale, real-time vehicle connectivity \cite{energy1,V2XRA-Pt2,EnergySCS,EnergySCS2,EnergyMCS}. These works typically explore how different communication control factors can be evaluated to reduce RF energy consumption without compromising reliability. The energy-efficient resource allocation problem in urban V2X scenarios is considered in \cite{energy1,V2XRA-Pt2}. It indicates that the appropriate control of \( {P}_{\mathrm{t}} \) and interference management are important for energy-efficient V2X. In addition, $\mathrm{SCS}$ and $\mathrm{MCS}$ can also come at the cost of increased power consumption. The research conducted in \cite{EnergySCS} shows that higher $\mathrm{SCS}$, such as $30$ \si{\kilo\hertz}, can lead to higher energy requirements for accurate transmission and reception due to increased signal bandwidth and processing demands, especially under different traffic conditions. In contrast, it was shown that such low-priority and delay-tolerant applications as traffic monitoring or non-time-sensitive updates, benefit from lower $\mathrm{SCS}$ such as $15$ \si{\kilo\hertz} \cite{EnergySCS2}. Moreover, lower $\mathrm{MCS}$ levels can conserve energy and minimise resource usage. As demonstrated in \cite{EnergyMCS}, selecting a lower $\mathrm{MCS}$ in challenging environments leads to improved transmission efficiency and reduced RF energy consumption without significantly compromising throughput. However, these studies \cite{energy1,V2XRA-Pt2,EnergySCS,EnergySCS2,EnergyMCS} often overlook rural environments, where vehicles are typically more spaced out and signals travel over longer distances with fewer obstructions. Such conditions demand a joint performance evaluation of all three communication control factors to ensure reliable communication while maintaining energy efficiency.

Motivated by these research gaps, this paper investigates the impact of three communication control factors \( {P}_{\mathrm{t}} \), $\mathrm{SCS}$, and $\mathrm{MCS}$ on RF energy consumption and communication performance in different rural V2X scenarios. Performance is evaluated using $\mathrm{PRR}$ and the \( D_{\mathrm{comm}} \), reflecting the ability to deliver safety messages within safety-critical distances. By analysing $\mathrm{PRR}$ and RF energy consumption under different traffic densities and vehicle speeds, the study aims to provide configuration guidance that ensure reliable, low-latency communication while minimising energy expenditure.

    \begin{figure}[t]
		\centerline{\includegraphics[width=0.5\textwidth]{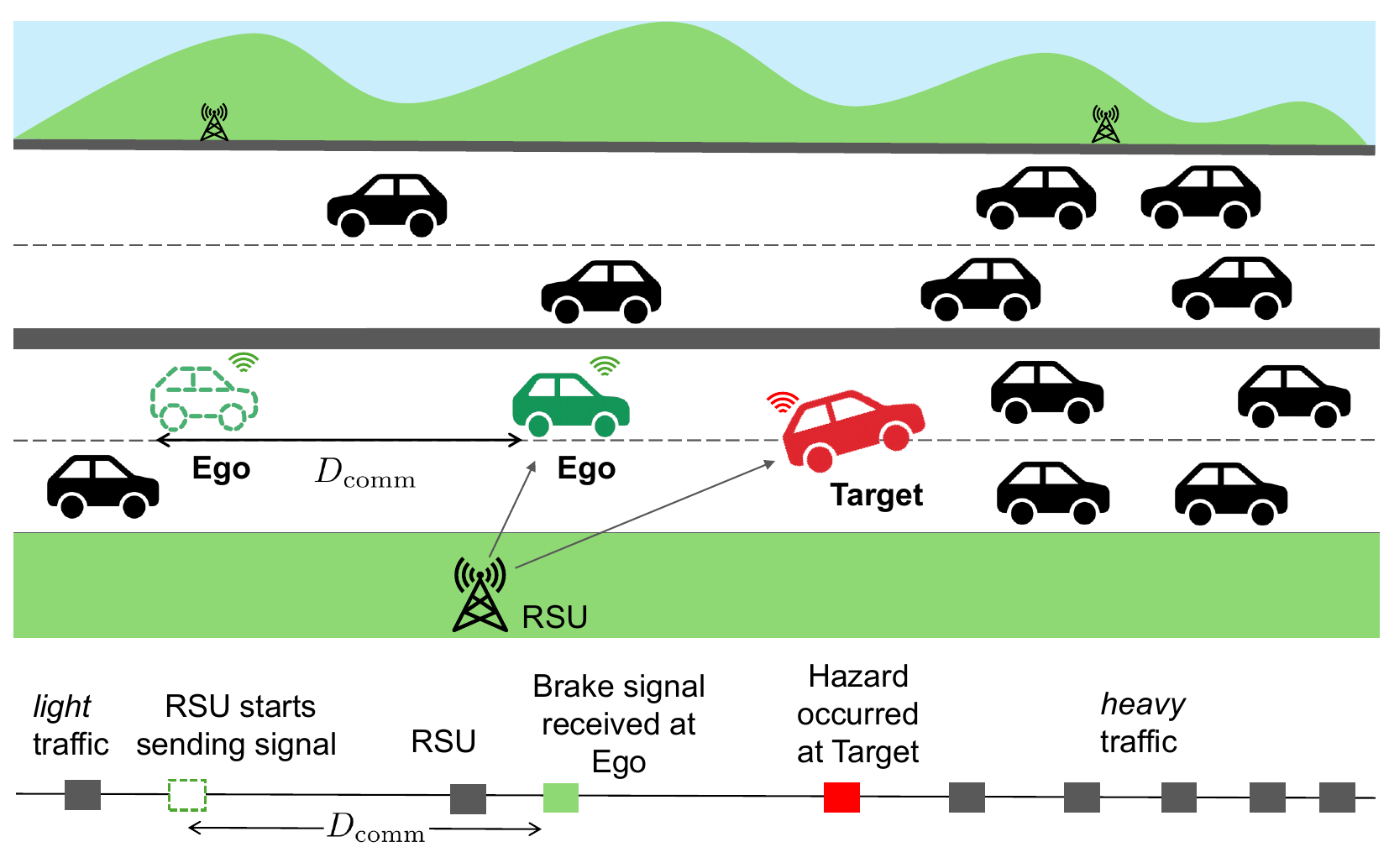}}
		\centering
		\caption{Traffic system model where $\texttt{RSU}$ sends a braking signal to $\texttt{Ego}$ to activate the brake. $D_\mathrm{comm}$ is shown as the distance that $\texttt{Ego}$ travels during the transmission.}
		\label{scenario}
	\end{figure}

	\begin{figure}[t]
        \centerline{\includegraphics[width=0.5\textwidth]{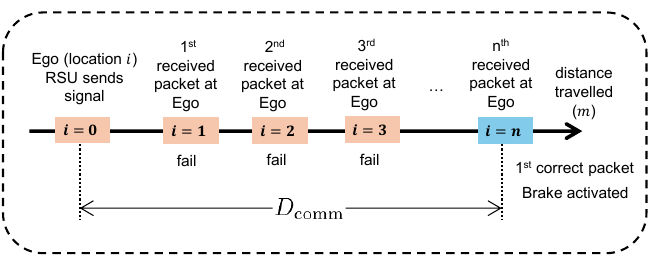}}
		\centering
		\caption{The process of sending and receiving safety signals between $\texttt{RSU}$ and $\texttt{Ego}$.}
		\label{Dcommstamp}
	\end{figure}
	\subsection{Main Contributions}
	The main contributions of this paper now can be summarised as follows. \textbf{Firstly}, we investigate the impact of three key communication control factors including $\mathrm{SCS}$, \( {P}_{\mathrm{t}} \) of road side unit, and $\mathrm{MCS}$ on safety and RF energy consumption in a connected braking scenario. Our rural scenario represents random layouts in \textit{light} and \textit{heavy} traffic conditions, where CAVs exchange time-critical braking messages via 5G NR Sidelink V2X communication. Our \textbf{second} contribution is the development of analytical formulations for the \(D_{\mathrm{comm}}\), $\mathrm{PRR}$ and RF energy consumption, which together describe the achievable braking reliability over distance. These formulations provide a comprehensive analysis, linking communication control factors to safety-critical performance indicators. \textbf{Finally}, the impact of communication control factors on the safety and RF energy consumption is validated through extensive simulations, offering valuable insights into the trade-offs between safety and RF energy consumption. Using the analytical results, we identify the appropriate configurations of $\mathrm{SCS}$, \( {P}_{\mathrm{t}} \), and $\mathrm{MCS}$ that contributes to an improved safety and energy performance. The analysis forms a baseline for adaptive control strategies on them in future rural CAV safety applications.

    The remainder of this paper is organised as follows. Section~\ref{se:system model} introduces the system model and the simulation setup. In Section~\ref{se:Performance Formulation}, we formulate \(D_{\mathrm{comm}}\) and RF energy consumption. In Section~\ref{se:simulation results}, we illustrate the simulation results and explore the impact on $\mathrm{PRR}$, \(D_{\mathrm{comm}}\) from tuning communication control factors. Section~\ref{se:analysis} provides quantitative analysis on the results, and impact of communication control factors on safety and RF energy consumption. Finally, Section~\ref{se:conclusion} summarises the main findings of this paper.

	\section{System Model}
	\label{se:system model}
	\subsection{Traffic Model and Critical Distance}
    \label{se:system model A}
	We consider a traffic consisting of a road side unit ($\texttt{RSU}$)\footnote{In our rural scenario, \texttt{RSU} is assumed to operate as a communication-centric V2X infrastructure node, focusing on sidelink communication support rather than environmental perception or sensor-based detection.}, an ego vehicle ($\texttt{Ego}$) and a target vehicle ($\texttt{Target}$) in a rural area. All the vehicles are registered with $\texttt{RSU}$, as illustrated in Fig.~\ref{scenario}. Considering a scenario where $\texttt{Target}$ is undergoing a hazard, as a result, $\texttt{RSU}$ notified $\texttt{Ego}$ by sending a brake signal via the communication channel. In this context, $\texttt{RSU}$ transmits data packets containing brake signals, and $\texttt{Ego}$ needs to receive at least one correct data packet to perform the brake. Error packets received by $\texttt{Ego}$ causes a delay due to retransmission. Since the time $\texttt{RSU}$ sent out the brake signals, $\texttt{Ego}$ travels a distance until it first received a correct signal. This distance is known as a critical distance, e.g., \( D_{\mathrm{comm}} \). To explain, $D_\mathrm{comm}$ demonstrates the distance that $\texttt{Ego}$ travels during the time required by signal retransmission due to error, as illustrated in Fig.~\ref{Dcommstamp}. At location \(i=0\), $\texttt{RSU}$ transmits a safety message in terms of packet to $\texttt{Ego}$ periodically. $\texttt{Ego}$ receives these packets sequentially at location indices \(i=1,2,3,\ldots\). Due to channel errors, the process continues until the first error-free packet is successfully received at index \(i=n\), which then provides the valid brake signal. During this interval, $\texttt{Ego}$ continues to travel at its current speed, covering a distance referred to as \(D_{\mathrm{comm}}\). In this context, \( D_{\mathrm{comm}} \) can be considered as a safety threshold distance. The connected braking can execute successfully only when the distance between $\texttt{RSU}$ and $\texttt{Ego}$ is larger than \( D_{\mathrm{comm}} \).

	\subsection{Simulation Setup}
		\begin{table}[t]
		\centering
		\begin{minipage}{0.475\textwidth}
			\centering
			\caption{SIMULATION PARAMETERS}
			\label{tab:table1}
			\renewcommand{\arraystretch}{0.75}
			\begin{tabular}{@{}p{4.0cm}p{3.9cm}@{}}
				\hline
				\hline
				\textbf{Traffic Scenario} & \\
				Scenario & $2$ lanes rural highway, $2000$ \si{m} \\
				Vehicle density ($\rho$) & $30$ - $100$ vehicles per~km\\
				Average speed ($v$) & $50$ - $110$ \si{\kilo\metre/\hour} \\
				Mobility pattern ($V$) &$V \sim \mathcal{N}(v,\, 7^2)\ \text{km/h}$ \cite{WiLab2} \\
				Packet generation & Every $100$ \si{\milli\second} \\
				
				\hline
				\textbf{Communication Settings} & \\
				Center frequency $(f_c)$ & ITS bands at $5.9$ \si{\giga\hertz} \\
				Bandwidth & $20$ \si{\mega\hertz} \\
				Transmission power density & $13$ \si{\dBm}/\si{\mega\hertz} \\
				Antenna gain & $3$ \si{\dBi} \\
				Noise figure & $9$ \si{\dB} \\
				Propagation model & 3GPP Rural \cite{3GPP} \\
				Shadowing & Variance $3$ \si{\dB}, decorrelation distance $25$ \si{\meter} \cite{WiLab1} \\
				Packet size & $350$ bytes \\
				Number of packets per second & $10$\\
				Hybrid Automatic Repeat Request  & Blind retransmissions, max. $2$ \\
                Average circuit power per transmission $(P_c)$& $100$ \si{\mW} \cite{R2C1_2}\\
                Average idle power $(P_i)$ & $125$ \si{\mW} \cite{R2C1_1}\\
				\hline
				\textbf{Communication Control Factors} & \\
				$\mathrm{MCS}$ & $8$-$10$ (QPSK), $12$-$18$ (16-QAM) \\
				$\mathrm{SCS}$ & $15$ and $30$ \si{\kilo\hertz}\\
				\( {P}_{\mathrm{t}} \) & $23$, $24$, $25$, $26$ \si{\dBm}\\
				Subchannel size & $10$ Physical Resource Blocks (PRB) \\
				Number of subchannels & $5$ \\
				Subchannels per packet & $2$ \\
				Resource keep probability & $0.4$ \\
				Number of Demodulation Reference Symbols (DMRS) per slot& $24$ for $\mathrm{SCS} = 15$, \ \ $18$ for $\mathrm{SCS} = 30$\\
				Allocation periodicity & $100$ \si{\milli\second} \\
				Sensing threshold & $–110$ \si{\dBm} \\
				\hline
				\hline
			\end{tabular}
		\end{minipage}
	\end{table}

	The V2X system in our rural scenario, considering traffic complexity and communication reliability, is simulated using WiLabV2XSim \cite{WiLab1}, an open source simulator that is well known for its accurate modelling of vehicular mobility, wireless channel dynamics, and communication reliability under varying traffic conditions. It provides a flexible and extensible framework for modelling V2X networks under various communication protocols, channel conditions, and mobility patterns~\cite{WiLab2, WiLab3}.
    
    In this study, to approximate rural traffic conditions, our rural highway scenario has a total road length of $2000$ \si{\meter} and two lanes ($2 + 2$) both ways. Each lane is $4$ \si{\meter} wide and the road speed limit is set to $120$ \si{\kilo\metre/\hour}. We consider light traffic at $30$ and $50$, and heavy traffic at $80$ and $100$ vehicles per \si{\kilo\meter}. Vehicles moving at an average speed of $50$ to $110$ \si{\kilo\metre/\hour}, which follows a Gaussian
	distribution with a standard
	deviation of $7$ \si{\kilo\metre/\hour}. 
	
	The channel model setting is generated following the 3GPP rural \cite{3GPP}. Various $\mathrm{MCS}$ values are considered in the simulation: $\mathrm{MCS}$ $8$ to $10$ represents the Quadrature Phase Shift Keying Modulation (QPSK), and $\mathrm{MCS}$ $12$ to $18$ perform the $16$-Quadrature Amplitude Modulation ($16$ QAM). In addition, $\mathrm{SCS}$ is modified between $15$ and $30$ \si{\kilo\hertz} and \( {P}_{\mathrm{t}} \) is modified from $23$ to $26$ \si{\dBm}. During the sensing window, \texttt{RSU} measures the received signal reference power (RSRP) associated with decoded sidelink control information (SCI), and the sensing threshold  is set to $-110$~\si{dBm} in this study. Different vehicle densities cause varying channel congestion, impacting delay and \(\mathrm{PRR}\), and thereby altering the performance evaluation of communication control factors. In this work, we also consider shadowing. The value of shadowing \( S_i \) is updated to its new value \( S_{i+1} \) following the distance of the vehicle traveling as follows \cite{3gpp36885}:
	\begin{equation}
	S_{i+1} = \exp \Big(-\frac{d}{d_{\mathrm{corr}}}\Big) S_i + \sqrt{1 - \exp \Big(-\frac{2d}{d_{\mathrm{corr}}} \Big)} \ N_{i+1},
	\label{eq1}
	\end{equation}
	where $d$ [\si{meters}] is the change in distance between the \texttt{RSU} and the \texttt{Ego}. \( d_{\mathrm{corr}} \) denotes the decorrelation distance of $25$ \si{\meter}. Beyond \( d_{\mathrm{corr}} \) the shadowing becomes increasingly uncorrelated and varies more independently. In addition, \( N_{i+1} \) stands for a log normal independent random variable with standard deviation \( \sigma =3\). We also use the free space pathloss (PL) model specified in 3GPP~\cite{3GPP}:
	
	\begin{equation}
	\text{PL}[\si{dB}] = 20 \log_{10}(d) + 20.0 \log_{10}(f_c)+ 32.45,
	\label{eq2}
	\end{equation}
	where $f_{c}$ is the center frequency in \si{\giga\hertz}. The simulation parameters for the traffic scenario, communication settings and communication control factors are summarised in Table~\ref{tab:table1}. Specific parameters will be provided in each simulation.

    \section{Performance Formulation}
    \label{se:Performance Formulation}
    This section aims to formulate the received ${E_b}/{N_0}$, the data rate, the critical distance $D_\mathrm{comm}$ and the total energy consumption.
    \subsection{The Analysis of ${E_b}/{N_0}$ and Data Rate}
    The maximum Bit Error Rate ($\mathrm{BER}$) against the V2X distance can be computed using the following equation:
	\begin{equation}
	\mathrm{BER}(d_i) = 1 - \Big[1-\mathrm{PER}(d_i) \Big]^{\frac{1}{L}} = 1 - \mathrm{PRR}(d_i)^{\frac{1}{L}},
	\end{equation}
	where \( L \) is the packet length. The $\mathrm{BER}$ can also be formulated as
	
	\begin{equation}
	\mathrm{BER}(d_i, M) =
	\begin{cases} 
	Q\left(\sqrt{\frac{E_b}{N_0}(d_i)}\right), &  M = 4  \\[8pt]
	\frac{3}{8} Q\left(\sqrt{\frac{4}{5} \cdot \frac{E_b}{N_0}(d_i)}\right), &  M = 16,
	\end{cases}
	\end{equation}
	where $Q(.)$ denotes the Q-function. Therefore, to determine the required \(E_b/N_0\), we use the inverse Q-function, as follows
	\begin{equation}
	\!\!\!\!\frac{E_b}{N_0}(d_i, M) =
	\begin{cases} 
	\left[ Q^{-1} \big( \mathrm{BER}(d_i, M) \big) \right]^2, &  \!\! M = 4 \\[8pt]
	\frac{5}{4} \left[ Q^{-1} \big(\frac{8}{3}\ \mathrm{BER}(d_i, M) \big) \right]^2, &  \!\! M = 16.
	\end{cases}
	\end{equation}
	where \( Q^{-1} \) represents the inverse Q-function.
    The data rate can be expressed as:
	\begin{equation}
	R =  12 \ N_{\mathrm{sub}} \ N_{\mathrm{bits/symbol}} \ R_c \ \text{$\mathrm{SCS}$} \ N_{\mathrm{symbols}},
	\end{equation}
	where $N_{\mathrm{sub}}$ denotes the number of allocated PRBs, $N_{\mathrm{bits/symbol}}$ represents the number of bits per symbol dictated by the modulation scheme while $R_c$ is the coding rate and $N_{\mathrm{symbols}}$ is the number of symbols per slot.

    \subsection{The Analysis of Critical Distance}
    \begin{lemma}[\textbf{The Formulation of Critical Distance}]
        Let $v$ stands for the average speed of \texttt{Ego} and $N$ be the total number of delivered packets.  Mathematically, $D_\mathrm{comm}$ can be expressed as 
        \begin{equation}
            {D}_{\mathrm{comm}} = \frac{v}{\mathrm{pps}}+\frac{(N-1)v}{\mathrm{pps}} \big(1-\mathrm{PRR} \big).
        \label{eqDcomm}
        \end{equation}
    \end{lemma}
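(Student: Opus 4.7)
The plan is to turn the retransmission-induced delay between the initial brake-signal transmission and Ego's first correctly received packet into a distance using Ego's average speed $v$. The derivation rests on just two ingredients, namely the periodicity of packet generation and the definition of $\mathrm{PRR}$, so I expect the argument to reduce to a direct unit conversion rather than any deep estimation.

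First, I would invoke the periodic transmission model described in Section~\ref{se:system model}. Because packets are generated at a rate of $\mathrm{pps}$ packets per second, consecutive transmissions are spaced by $1/\mathrm{pps}$ seconds, and therefore the total air time to deliver $N$ packets is $T = N/\mathrm{pps}$. Second, by the definition of $\mathrm{PRR}$, the fraction of these $N$ packets that fail to arrive error-free is $1-\mathrm{PRR}$; on average, $N(1-\mathrm{PRR})$ of them are erroneous and therefore extend the waiting interval between the initial transmission and the first successful reception at index $i=n$ described in the timing diagram of Fig.~\ref{Dcommstamp}. Summing only the air times of these failed packets gives an effective blocking delay $T_{\mathrm{err}} = \frac{N}{\mathrm{pps}}\bigl(1-\mathrm{PRR}\bigr)$. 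Treating $v$ as approximately constant across this short retransmission window, I would multiply $T_{\mathrm{err}}$ by the speed to recover \eqref{eqDcomm}:
\begin{equation*}
D_{\mathrm{comm}} \;=\; v\,T_{\mathrm{err}} \;=\; \frac{Nv}{\mathrm{pps}}\bigl(1-\mathrm{PRR}\bigr).
\end{equation*}

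The main obstacle I anticipate is not algebraic but interpretive: for a single braking event, the number of failed packets before the first success is a geometric random variable, so the factor $(1-\mathrm{PRR})$ really captures a long-run frequency rather than an exact per-event count. I would handle this by explicitly framing $D_{\mathrm{comm}}$ as an expected critical distance under a stationary $\mathrm{PRR}$ regime, which is the operationally meaningful quantity for dimensioning rural safety margins. A secondary, minor concern is the constant-speed assumption; since the retransmission window is on the order of milliseconds, any fluctuation in $v$ produces only second-order corrections that can safely be folded into the average speed.
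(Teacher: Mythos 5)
Your proposal is correct and follows essentially the same route as the paper's own proof: both count the expected number of failed receptions as $N(1-\mathrm{PRR})$, convert each failure into one slot of duration $1/\mathrm{pps}$, and multiply the resulting expected delay by the (approximately constant) speed $v$. Your closing remark about the per-event failure count being geometric is a fair caveat, but it applies equally to the paper's argument, which likewise treats receptions as i.i.d.\ Bernoulli trials and interprets $D_{\mathrm{comm}}$ as an expected distance.
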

    \begin{proof}
        Packets are transmitted periodically at rate $\mathrm{pps}$ packets per second, so each transmission slot lasts $1/\mathrm{pps}$ seconds. In this work, for tractable analysis, we assume that each packet reception is an independent and identically distributed (i.i.d) Bernoulli trial with the same success probability \cite{iid1}, the \texttt{Ego} vehicle is assumed to travel a set distance during the packet transmission attempt of the first packet, the expected number of failed receptions over the next $N-1$ attempts is $(N-1)(1-\mathrm{PRR})$. Each failure delays the first correctly received packet by exactly one slot; hence the expected additional waiting time (relative to the first attempt) is 
        \begin{align}
            \mathbb{E}[T_{\mathrm{extra}}]=\frac{1+(N-1)(1-\mathrm{PRR})}{\mathrm{pps}}~.
        \end{align}
        If the \texttt{Ego} vehicle travels at approximately constant speed $v$ during this interval, the expected additional distance covered before a correct packet is received is
        \begin{align}
            D_{\mathrm{comm}} \;=\; v\,\mathbb{E}[T_{\mathrm{extra}}]
    \;=\; \frac{v}{\mathrm{pps}}+\frac{(N-1)v}{\mathrm{pps}} \big(1-\mathrm{PRR} \big).
        \end{align}
        The proof is now completed.
    \end{proof}

    \subsection{The Analysis of Total Energy}
    \begin{lemma}[\textbf{Total energy spent on RF transmissions with truncated HARQ}]
    Let $\mathrm{pps}$ be the packet generation rate [pkt/s], $T$ be the horizon length [s]. Hence, there are $N_{\mathrm{pkt}}=\mathrm{pps}\,T$ packets generated. In addition, $L_{\mathrm{bits}}$ is the payload length [bits], $R$ denotes data rate [bit/s] and $H\in\mathbb{N}$ is the maximum number of attempts per generated packet (blind HARQ cap). Then, the expected total RF transmit energy over the horizon is
    \begin{align} \label{eq:etotal}
        E_{\mathrm{total}} = N_{\mathrm{pkt}}\, P_t\,\frac{L_{\mathrm{bits}}}{R}\;
   \frac{1-(1-\mathrm{PRR})^{H}}{\mathrm{PRR}}.
    \end{align}
    \end{lemma}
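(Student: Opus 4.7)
The plan is to compute the expected total transmit energy by (i) fixing the energy cost of one transmission attempt, (ii) computing the expected number of attempts per generated packet under the blind HARQ cap $H$, and (iii) aggregating over the $N_{\mathrm{pkt}}=\mathrm{pps}\,T$ packets by linearity of expectation.

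First I would note that a single transmission of a payload of length $L_{\mathrm{bits}}$ at rate $R$ occupies the channel for $L_{\mathrm{bits}}/R$ seconds, so the energy per attempt is exactly $P_t L_{\mathrm{bits}}/R$. This factor is deterministic and can be pulled out of the expectation.

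Next, for a given packet, let $K$ be the number of transmission attempts actually used. Assuming (as in Lemma~1) that each attempt is an i.i.d.\ Bernoulli trial with success probability $\mathrm{PRR}$, the packet succeeds on its $k$-th attempt with probability $\mathrm{PRR}(1-\mathrm{PRR})^{k-1}$ for $k=1,\dots,H$, and all $H$ attempts fail with probability $(1-\mathrm{PRR})^{H}$. Thus $K=\min(T,H)$ where $T$ is geometric with parameter $\mathrm{PRR}$. Rather than summing $k\,\mathrm{PRR}(1-\mathrm{PRR})^{k-1}$ directly, I would use the tail-sum identity
\begin{equation}
\mathbb{E}[K]=\sum_{k=1}^{H}\Pr(K\ge k)=\sum_{k=0}^{H-1}(1-\mathrm{PRR})^{k},
\end{equation}
and then apply the finite geometric series to obtain $\mathbb{E}[K]=\bigl(1-(1-\mathrm{PRR})^{H}\bigr)/\mathrm{PRR}$.

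Finally, by linearity of expectation over the $N_{\mathrm{pkt}}$ packets, the expected total energy is the product of $N_{\mathrm{pkt}}$, the per-attempt energy $P_t L_{\mathrm{bits}}/R$, and $\mathbb{E}[K]$, which yields \eqref{eq:etotal}. The only real subtlety, and the step I would double-check, is the truncation bookkeeping: verifying that the tail-sum identity correctly accounts for the failure event $\{T>H\}$, in which the packet consumes all $H$ attempts without success. Using the form $K=\min(T,H)$ makes this transparent, since $\Pr(K\ge k)=\Pr(T\ge k)=(1-\mathrm{PRR})^{k-1}$ for all $k\le H$, so no separate boundary term is needed.
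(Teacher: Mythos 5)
Your proposal is correct and follows essentially the same route as the paper: energy per attempt $P_t L_{\mathrm{bits}}/R$, the truncated-geometric count $\min(K,H)$, the tail-sum identity giving $\bigl(1-(1-\mathrm{PRR})^{H}\bigr)/\mathrm{PRR}$, and aggregation over $N_{\mathrm{pkt}}$ packets. The only blemish is notational: you reuse $T$ for the geometric attempt count while the lemma already uses $T$ for the horizon length, so that symbol should be renamed.
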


    \begin{proof}
    Let $K\in \mathbb{N}^+$ denote the number of transmission attempts until the first correct reception for a given generated packet. Under the standard i.i.d.\ success model with success probability $\mathrm{PRR}$ per attempt, $K$ is geometrically distributed with the following Probability Mass Function (PMF):
    \begin{align}
        \mathbb{P}(K=k)=(1-\mathrm{PRR})^{k-1}\,\mathrm{PRR},\qquad k=1,2,\ldots
    \end{align}
    and thus $\mathbb{P}(K\ge h)=(1-\mathrm{PRR})^{h-1}$ for $h\ge1$. Because at most $H$ attempts are undertaken per packet, the number of actual transmissions devoted to that packet is $\min(K,H)$. By the tail-sum identity for nonnegative integer-valued random variables, we have
    \begin{equation}
        \begin{aligned}
        \!\!\!\!\mathbb{E}[\min(K,H)] \!& = \sum_{h=1}^{H}\mathbb{P}\{K\ge h\} \\
        &= \sum_{h=1}^{H}(1-\mathrm{PRR})^{h-1} \frac{1-(1-\mathrm{PRR})^{H}}{\mathrm{PRR}}.
        \end{aligned}
    \end{equation}
    Note that each attempt lasts $t_{\mathrm{tx}}=L_{\mathrm{bits}}/R$ seconds and consumes $E_{\mathrm{att}}=P_t\,L_{\mathrm{bits}}/R$ Joules. Hence, the expected transmit energy per generated packet is
    \begin{align}
    E_{\mathrm{pkt}}
    = P_t\,\frac{L_{\mathrm{bits}}}{R}\,
  \frac{1-(1-\mathrm{PRR})^{H}}{\mathrm{PRR}}.
    \end{align}
    Since there are $N_{\mathrm{pkt}}$ generated packets, $E_\mathrm{total} = N_\mathrm{pkt} E_\mathrm{pkt}$, which proves the claim.
    \end{proof}

    \begin{remark}
    In the special case when only a single attempt is allowed (e.g., \(H=1\)), the expected total transmit energy reduces to \(N_\mathrm{pkt} P_t\,L_{\mathrm{bits}}/R\), which is independent of \(\mathrm{PRR}\). 
    In the case when unlimited retransmissions are allowed (e.g., \(H\to\infty\)), $(1 - \mathrm{PRR})^H \to 0$. Therefore, the expected energy per generated packet becomes \(N_\mathrm{pkt}P_t\,L_{\mathrm{bits}}/R/\mathrm{PRR}\).
    \end{remark}

\begin{lemma}[\textbf{Total communication energy}]
\label{lem:etotal_comm}
let $P_{\mathrm{c}}$ denote the additional averaged circuit power consumed during transmission.
Let $P_{\mathrm{i}}$ denote the averaged platform idle power over the horizon $T$.
Then the expected total communication energy over the horizon is
\begin{align}
\label{eq:etotal_comm}
\!\!\!E_{\mathrm{comm}}
&= N_{\mathrm{pkt}}\!\left[(P_t+P_c)\frac{L_{\mathrm{bits}}}{R}
\frac{1-(1-\mathrm{PRR})^{H}}{\mathrm{PRR}}
+ \frac{P_{\mathrm{i}}}{\mathrm{pps}}\right]. \nonumber
\end{align}
\end{lemma}

\begin{proof}
Based on Lemma 2, during each attempt, an averaged circuit power $P_c$ is associated with the active transmission. Therefore, the expected communication energy spent on transmissions for one generated packet is
\begin{equation}
\begin{aligned}
\label{eq:circuitpower}
E_{\mathrm{pkt,tx}}= (P_t+P_c)\,\frac{L_{\mathrm{bits}}}{R}\,\frac{1-(1-\mathrm{PRR})^{H}}{\mathrm{PRR}},
\end{aligned}
\end{equation}
Next, the corresponding idle energy over the horizon is $E_{\mathrm{idle}}=P_i\,T$, and in per-packet allocation:
\begin{align}
\label{eq:idlepower}
E_{\mathrm{idle}}
= P_i\,T
= N_{\mathrm{pkt}}\;\frac{P_i}{\mathrm{pps}},
\end{align}
Combing ~\eqref{eq:circuitpower} and ~\eqref{eq:idlepower} gives us
\begin{align}
E_{\mathrm{comm}}
&= N_{\mathrm{pkt}}\left(E_{\mathrm{pkt,tx}}+\frac{P_i}{\mathrm{pps}}\right).
\end{align}
which claims the lemma.
\end{proof}

	\section{Simulation Results} \label{se:simulation results}
    \begin{figure}[t]
		\centerline{\includegraphics[width=0.5\textwidth]{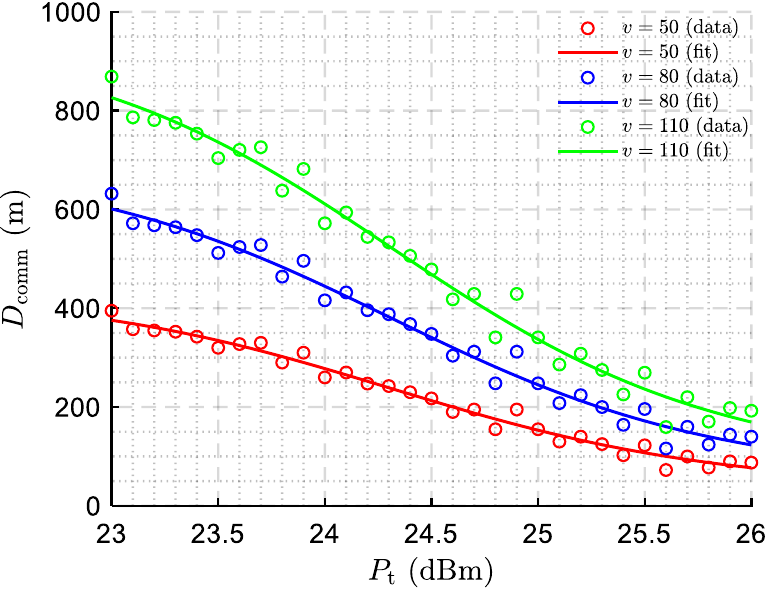}}
		\centering
		\caption{Minimum \( D_{\mathrm{comm}} \) under various values of \( {P}_{\mathrm{t}}\) and $v$.}
		\label{Dcomm}
	\end{figure}
    
	This section provides the simulation results, which are progressively structured to build on each other’s insights. We begin by analysing \( D_{\mathrm{comm}} \) to determine the safety-critical communication threshold for the upcoming simulation traffic scenarios. Subsequently, the next simulation evaluates the effectiveness of different $\mathrm{SCS}$ settings. Based on this, an analysis of $P_t$ under different traffic conditions is investigated. The next simulation focuses on analysing $\mathrm{MCS}$ effectiveness while the final simulation campaign explores the RF energy consumption analysis, incorporating the performance evaluation of communication control factors derived from the findings of the previous simulations.

    \subsection{The Analysis of Critical Distance}
    Before introducing the main simulation results, we evaluate the minimum \( D_{\mathrm{comm}} \) for various values of \({P}_{\mathrm{t}}\) ranging from $23$ to $26$ \si{\dBm}. Fig.~\ref{Dcomm} illustrates the obtained values of \( D_{\mathrm{comm}} \) with respect to \({P}_{\mathrm{t}}\) at vehicle speeds of $50$, $80$, and $110$ \si{\kilo\metre/\hour}. The results show that higher \({P}_{\mathrm{t}}\) consistently decreases \( D_{\mathrm{comm}} \) across all speed levels. This is because increasing transmit power reduces the $\mathrm{PER}$ and shortens the retransmission delays. As shown in the result, increasing \(P_{\mathrm{t}}\) from $23$ to $26$~\si{\dBm} at $50$ \si{\kilo\metre/\hour} results in a substantial reduction in \(D_{\mathrm{comm}}\) from $375$~\si{\meter} to just under $100$~\si{\meter}, which is decreased for nearly $4$ times. A similar trend is observed at higher speeds, where the required \(D_{\mathrm{comm}}\) at \(v = 110~\si{\kilo\metre/\hour}\) is approximately twice that at \(v = 50~\si{\kilo\metre/\hour}\). Across all speeds, increasing \(P_{\mathrm{t}}\) from $23$ to $26$~\si{\dBm} has effectively reduce the \(D_{\mathrm{comm}}\) for more than $100$\%. Nevertheless, the result illustrates that increasing \(P_{\mathrm{t}}\) continues to provide a significant reduction in \(D_{\mathrm{comm}}\), even under these more demanding conditions. Introducing \( D_{\mathrm{comm}} \) at this stage is essential, as it serves as a safe distance threshold for subsequent simulations. Based on this, the most appropriate communication control factor setting is defined as the configuration that achieves the highest $\mathrm{PRR}$ before the simulated distance exceeds \( D_{\mathrm{comm}} \).

	\subsection{The Impact of SCS on PRR}
    
    \begin{figure}[t]
		\centerline{\includegraphics[width=0.5\textwidth]{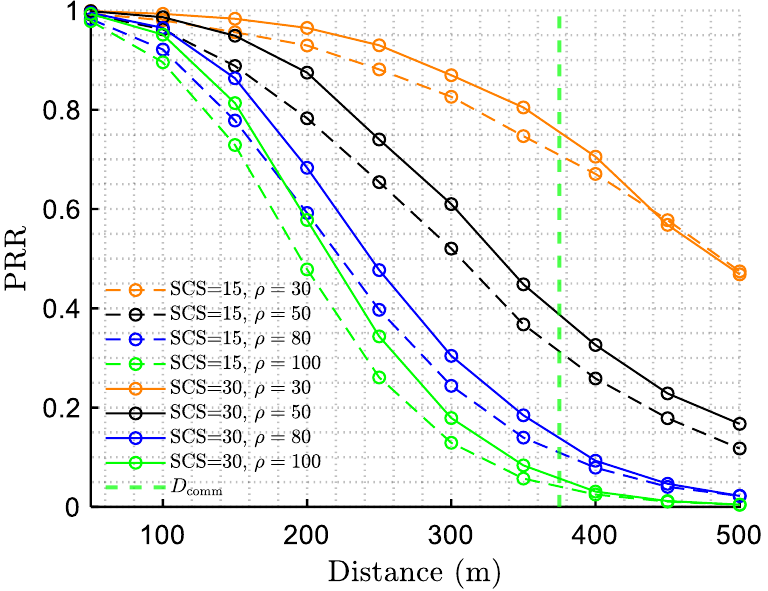}}
		\centering
		\caption{$\mathrm{PRR}$ benchmark between $\mathrm{SCS}$ = $15$ \si{\kilo\hertz} and $\mathrm{SCS}$ = $30$ \si{\kilo\hertz}. In this simulation, traffic densities are from $30$ to $100$ \si{vehicles/\kilo\metre}, with \(P_{\mathrm{t}} = 23\)~\si{\dBm} and $\mathrm{MCS}$ = $8$.}
		\label{23dBm, 30 compare 15 SCS MCS 8}
	\end{figure}
	
	\begin{figure}[t]
		\centerline{\includegraphics[width=0.5\textwidth]{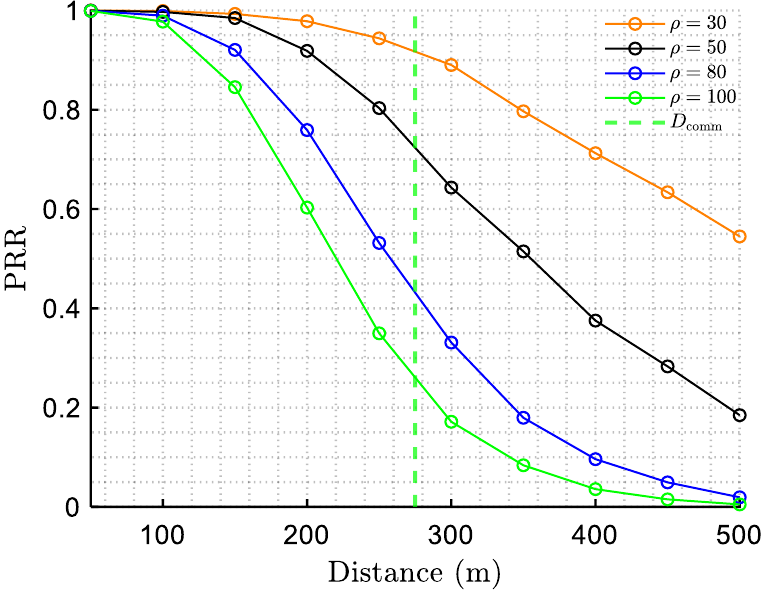}}
		\centering
		\caption{$\mathrm{PRR}$ tested with $\rho \in \{30, 50, 80, 100\}$ \si{vehicles/\kilo\metre}. In this simulation, $\mathrm{SCS}$ = $30$ \si{\kilo\hertz}, $\mathrm{MCS}$ = $8$, and \(P_{\mathrm{t}} = 24\)~\si{\dBm}.}
		\label{24 dBm}
	\end{figure}
	
	\begin{figure}[t]
		\centerline{\includegraphics[width=0.45\textwidth]{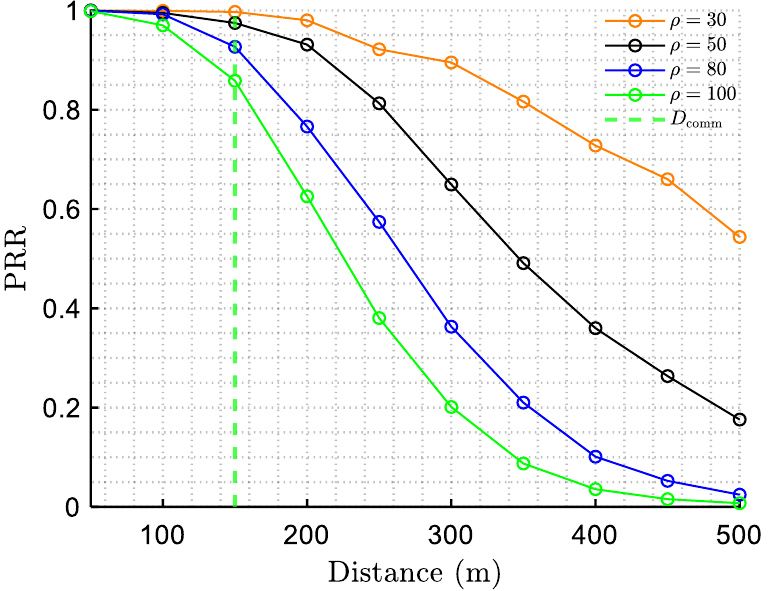}}
		\centering
		\caption{$\mathrm{PRR}$ tested with $\rho \in \{30, 50, 80, 100\}$ \si{vehicles/\kilo\metre}. In this simulation, $\mathrm{SCS}$ = $30$ \si{\kilo\hertz}, $\mathrm{MCS}$ = $8$, and \(P_{\mathrm{t}} = 25\)~\si{\dBm}.}
		\label{25 dBm}
	\end{figure}
	\begin{figure}[t]
		\centerline{\includegraphics[width=0.45\textwidth]{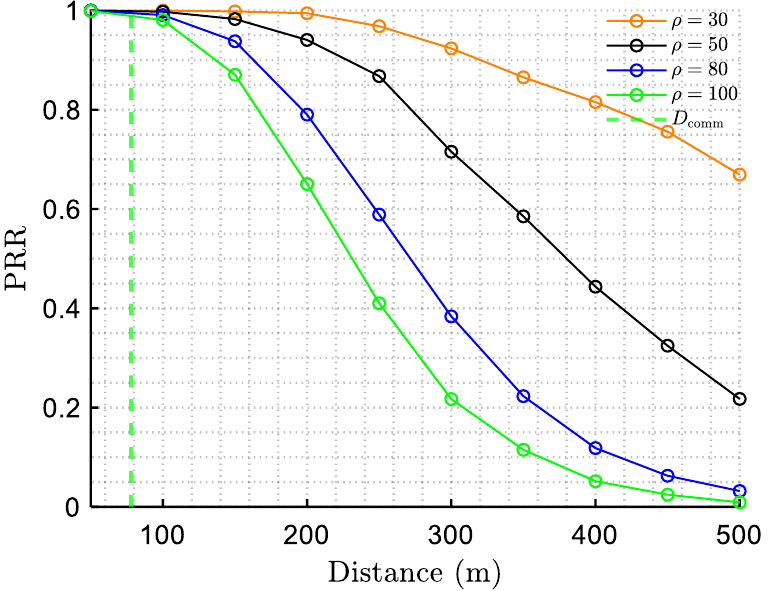}}
		\centering
		\caption{$\mathrm{PRR}$ tested with $\rho \in \{30, 50, 80, 100\}$ \si{vehicles/\kilo\metre}. In this simulation, $\mathrm{SCS}$ = $30$ \si{\kilo\hertz}, $\mathrm{MCS}$ = $8$, and \(P_{\mathrm{t}} = 26\)~\si{\dBm}.}
		\label{26 dBm}
	\end{figure}
	
	Fig.~\ref{23dBm, 30 compare 15 SCS MCS 8} shows the $\mathrm{PRR}$ performance under $\mathrm{SCS}$ values of $30$ \si{\kilo\hertz} and $15$ \si{\kilo\hertz} across different vehicle densities. $\mathrm{MCS} = 8$ is adopted in this simulation as it provides stronger error correction and higher robustness, thereby minimising external effects when evaluating $\mathrm{SCS}$ effectiveness. As can been seen, the acheived $\mathrm{PRR}$ at $\mathrm{SCS}$ of $30$ \si{\kilo\hertz} is higher than that of $15$ \si{\kilo\hertz}, due to its improved resilience to Doppler shifts and phase noise. In high-mobility and long-distance scenarios, using $30$ \si{\kilo\hertz} reduces intercarrier interference, while its shorter symbol duration mitigates multipath fading, both of which contribute to improved $\mathrm{PRR}$. Referring to Fig.~\ref{Dcomm}, the minimum \( D_{\mathrm{comm}} \) at \({P}_{\mathrm{t}}=23\) \si{\dBm} and $50$ \si{\kilo\metre/\hour} is $375$ \si{\meter}. At $375$ \si{\meter} distance, $\mathrm{SCS}$ of $30$ \si{\kilo\hertz} provides a better performance in most of the cases, compared to $15$ \si{\kilo\hertz}. Therefore, $\text{SCS} = 30$ is applied in subsequent simulations to study the impact of other communication control factors.

\begin{figure*}[t]
    \centering
    \subfloat[]{%
        \includegraphics[width=0.3\textwidth]{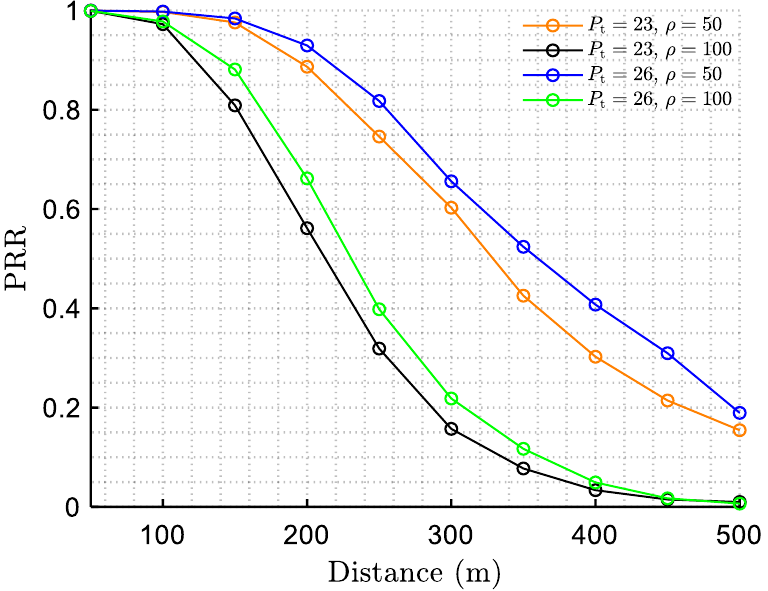}
        \label{Test PRR at 50 speed, 23 and 26 dBm}
    }
    \hfill
    \subfloat[]{%
        \includegraphics[width=0.3\textwidth]{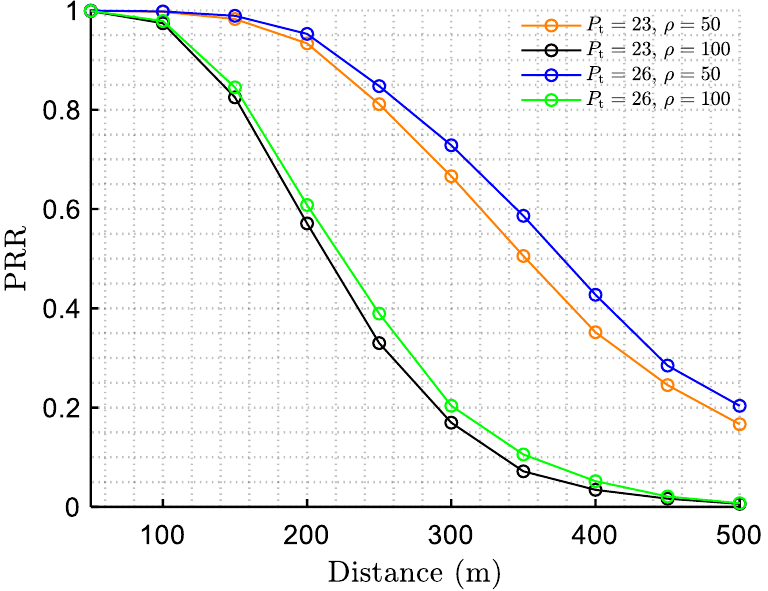}
        \label{Test PRR at 80 speed, 23 & 26 dBm}
    }
    \hfill
    \subfloat[]{%
    \includegraphics[width=0.3\textwidth]{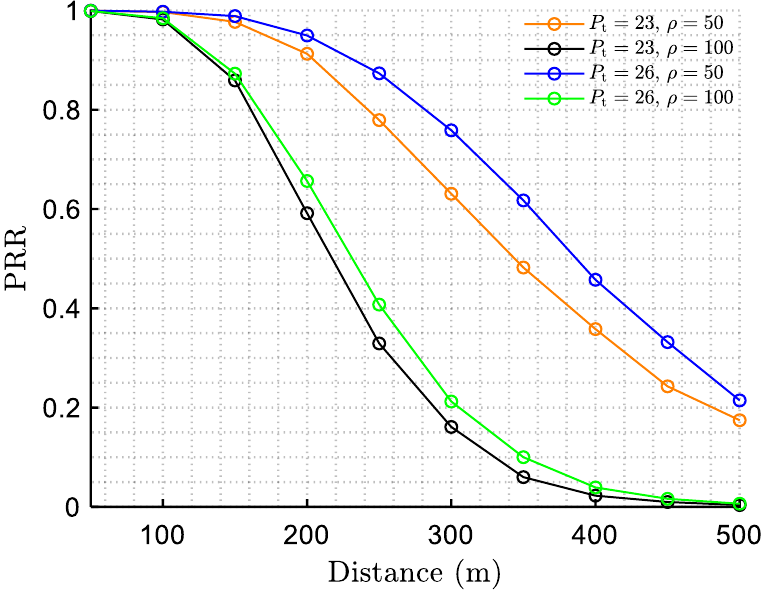}
        \label{Test PRR at 110 speed, 23 and 26 dBm}
    }
     \caption{$\mathrm{PRR}$ achieved across various $\mathrm{MCS}$ levels and distances. In this simulation, $\rho$ = $50$ \si{vehicles/\kilo\metre}, \(P_{\mathrm{t}} = 23\)~\si{\dBm}, and $\mathrm{SCS}$ = $30$ \si{\kilo\hertz}, (a) $v$ = $50$, (b) $v$ = $80$ and (c) $v$ = $110$ \si{\kilo\metre/\hour}.}
    \label{FIG8}
\end{figure*}

	\begin{figure}[t]
		\centerline{\includegraphics[width=0.5\textwidth]{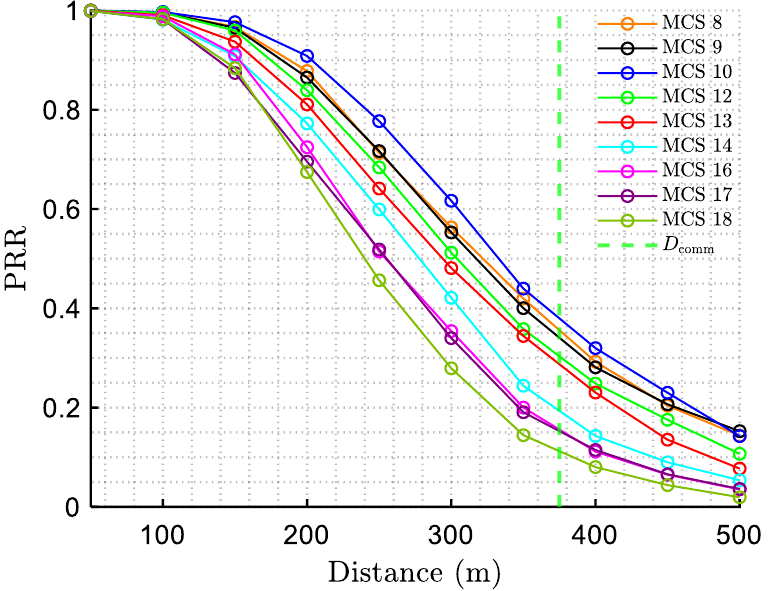}}
		\centering
		\caption{$\mathrm{PRR}$ achieved across various $\mathrm{MCS}$ levels and distances. In this simulation, $\rho$ = $50$ \si{vehicles/\kilo\metre}, $v$ = $50$ \si{\kilo\metre/\hour}, \(P_{\mathrm{t}} = 23\)~\si{\dBm}, and $\mathrm{SCS}$ = $30$ \si{\kilo\hertz}.}
		\label{MCS good1}
	\end{figure}
	
	\subsection{The Impact of Transmit Power on PRR}
	According to Fig.~\ref{23dBm, 30 compare 15 SCS MCS 8}, when \({P}_{\mathrm{t}}=23\) \si{\dBm}, a vehicle density of $30$ vehicles/\si{\kilo\meter} achieves a relatively reliable $\mathrm{PRR}$ of approximately $0.75$, as interference levels are lowest under this condition. Fig.~\ref{24 dBm}, \ref{25 dBm} and \ref{26 dBm} show the $\mathrm{PRR}$ results when \({P}_{\mathrm{t}}\) is set to $24$, $25$, and $26$~\si{\dBm}, respectively. Higher \({P}_{\mathrm{t}}\) decreases the minimum \( D_{\mathrm{comm}} \) and allows greater vehicle densities by improving overall communication quality. The corresponding minimum \( D_{\mathrm{comm}} \) values for each \({P}_{\mathrm{t}}\) are indicated in each figure, and serve as safety thresholds for comparative analysis across the simulation tasks.

	In the case when \({P}_{\mathrm{t}}=24\) \si{\dBm}, the $\mathrm{PRR}$ reaches $0.75$ at a vehicle density of $50$ vehicles/\si{\kilo\meter}. When \({P}_{\mathrm{t}}=25\) \si{\dBm}, the $\mathrm{PRR}$ exceeds $0.9$ and $0.8$ at $80$ and $100$ vehicles/\si{\kilo\meter}, respectively. When \({P}_{\mathrm{t}}=26\) \si{\dBm}, the $\mathrm{PRR}$ remains above $0.95$ across all vehicle densities. These results indicate that while a lower \({P}_{\mathrm{t}}\) of $23$ \si{\dBm} supports only $30$ vehicles/\si{\kilo\meter}, increasing \( {P}_{\mathrm{t}} \) to $26$ \si{\dBm} enables an improved communication and supports up to $100$ vehicles/\si{\kilo\meter}. The impact of \({P}_{\mathrm{t}}\) on $\mathrm{PRR}$ has also been evaluated under different vehicle speeds, as shown in Fig.~\ref{Test PRR at 50 speed, 23 and 26 dBm},~\ref{Test PRR at 80 speed, 23 & 26 dBm} and~\ref{Test PRR at 110 speed, 23 and 26 dBm}. The results illustrated that increasing \({P}_{\mathrm{t}}\) can effectively increase $\mathrm{PRR}$ in all cases. Under the average vehicle speed of $50$, $80$ and $110$ \si{\kilo\metre/\hour}. Increasing \({P}_{\mathrm{t}}\) from $23$ to $26$ \si{dBm} at $50$ vehicles per \si{\kilo\metre}, performs a clear trend in increasing the $\mathrm{PRR}$ for at least $0.2$. Even under heavy traffic of $100$ vehicles per \si{\kilo\metre}, the $\mathrm{PRR}$ increases for a maximum of $0.1$ in all cases. Thus, a higher \({P}_{\mathrm{t}}\) not only improves communication reliability, but also allows significantly higher vehicle densities in rural scenarios.  However, this improvement is achieved at the cost of increased power consumption, which reduces overall energy efficiency, which will be discussed later.
	
	\begin{figure}[t]
		\centerline{\includegraphics[width=0.5\textwidth]{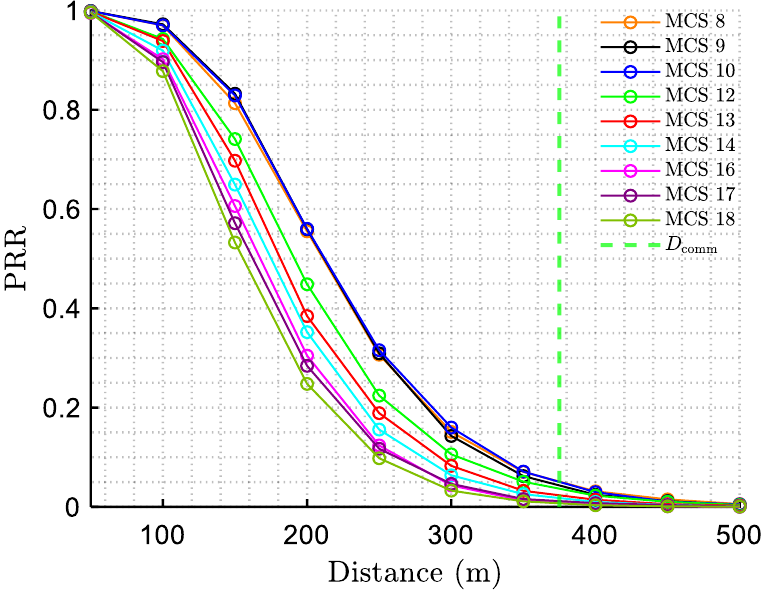}}
		\centering
		\caption{$\mathrm{PRR}$ achieved across various $\mathrm{MCS}$ levels and distances. In this simulation, $\rho$ = $100$ \si{vehicles/\kilo\metre}, $v$ = $50$ \si{\kilo\metre/\hour}, \(P_{\mathrm{t}} = 23\)~\si{\dBm}, and $\mathrm{SCS}$ = $30$ \si{\kilo\hertz}.}
		\label{MCS bad1}
	\end{figure}
	
	\begin{figure}[t]
		\centerline{\includegraphics[width=0.5\textwidth]{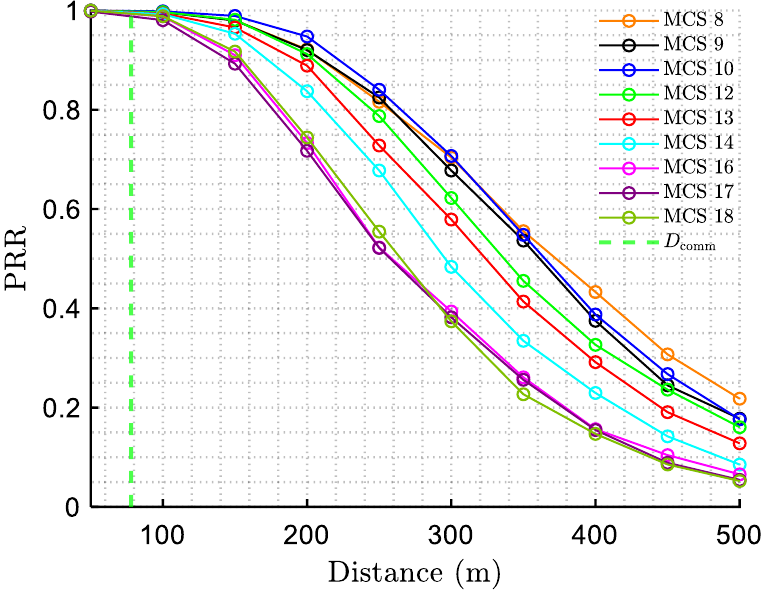}}
		\centering
		\caption{$\mathrm{PRR}$ achieved across various $\mathrm{MCS}$ levels and distances. In this simulation, $\rho$ = $50$ \si{vehicles/\kilo\metre}, $v = 50$ \si{\kilo\metre/\hour}, \(P_{\mathrm{t}} = 26\)~\si{\dBm}, and $\mathrm{SCS} = 30$ \si{\kilo\hertz}.}
		\label{MCS good2}
	\end{figure}
	
	\begin{figure}[t]
		\centerline{\includegraphics[width=0.5\textwidth]{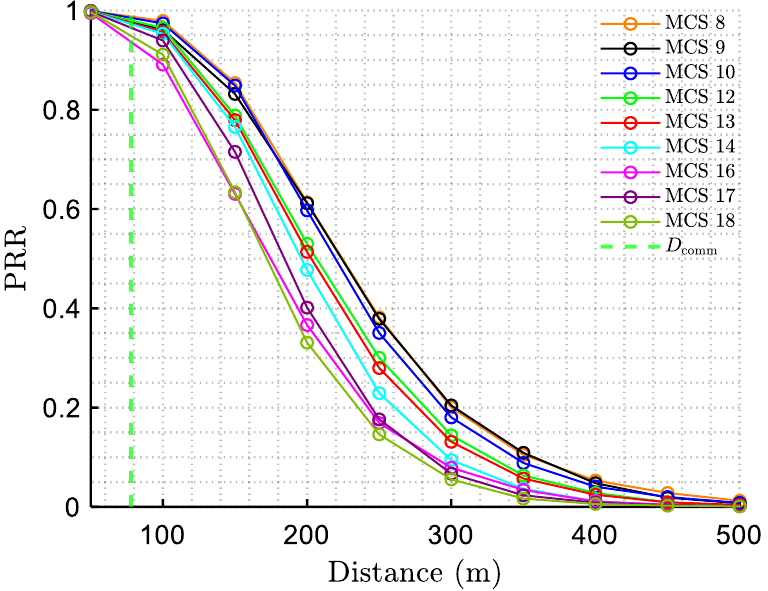}}
		\centering
		\caption{$\mathrm{PRR}$ achieved across various $\mathrm{MCS}$ levels and distances. In this simulation, $\rho$ = $100$ \si{vehicles/\kilo\metre}, $v$ = $50$ \si{\kilo\metre/\hour}, \(P_{\mathrm{t}} = 26\)~\si{\dBm}, and $\mathrm{SCS}$ = $30$ \si{\kilo\hertz}.}
		\label{MCS bad2}
	\end{figure}
	
	\subsection{The Impact of MCS on PRR}
	
	Figs.~\ref{MCS good1}, \ref{MCS bad1}, \ref{MCS good2} and \ref{MCS bad2} present the $\mathrm{PRR}$ results for various $\mathrm{MCS}$ values, transmit powers and traffic densities. The results simulated under light traffic conditions of $50$ \si{vehicles/\kilo\metre} are shown in Figs.~\ref{MCS good1} and \ref{MCS good2}. The results indicate that $\mathrm{PRR}$ declines sharply when $\mathrm{MCS}$ levels exceed $10$, while RF energy consumption increases significantly. The most noticeable improvement occurs between $\mathrm{MCS}$ $8$ and $\mathrm{MCS}$ $10$ for both simulations, where $\mathrm{PRR}$ rises by more than $0.05$ at distances below $300$ \si{\meter}, with additional gains observed between $300$ and $500$ \si{\meter}. The results simulated under heavy traffic conditions of $100$ \si{vehicles/\kilo\metre} are shown in Figs.~\ref{MCS bad1} and \ref{MCS bad2}, at a density of $100$ vehicles/\si{\kilo\meter}, the difference in $\mathrm{PRR}$ between $\mathrm{MCS}$ $8$ and $\mathrm{MCS}$ $10$ becomes negligible for both transmit powers of $23$ and $26$ \si{\dBm}. Furthermore, beyond $350$ \si{\meter}, $\mathrm{MCS}$ $8$ outperforms $\mathrm{MCS}$ $10$ at $26$ \si{\dBm} under lighter traffic conditions with $50$ vehicles/\si{\kilo\meter}. The results highlight three key points. First, while higher $\mathrm{MCS}$ levels increase spectral efficiency, they reduce reliability due to an increase in sensitivity to interference and channel impairments. Second, the limited gains beyond $\mathrm{MCS}$ $10$ are outweighed by the extra retransmissions required in dense traffic, leading to higher energy costs without improving effective $\mathrm{PRR}$. Finally, it is therefore preferable to use a lower MCS, which maintains reliable communication and lower RF energy consumption. Consider higher $\mathrm{MCS}$ levels demand higher \(E_b/N_0\) and are more prone to retransmissions in congested or long-range scenarios, it is essential to select the lowest practical MCS. The lowest $\mathrm{MCS}$ value of $8$, with stable performance in all traffic conditions establishes it as a robust operating point, balancing reliability and energy efficiency at the safety threshold distance \(D_{\mathrm{comm}}\). These findings demonstrate the limited benefit of employing higher $\mathrm{MCS}$ levels, which provide little improvement in reliability while imposing higher energy costs, particularly in dense traffic scenarios.

\begin{figure*}[t]
    \centering
    \subfloat{%
        \includegraphics[width=0.4\textwidth]{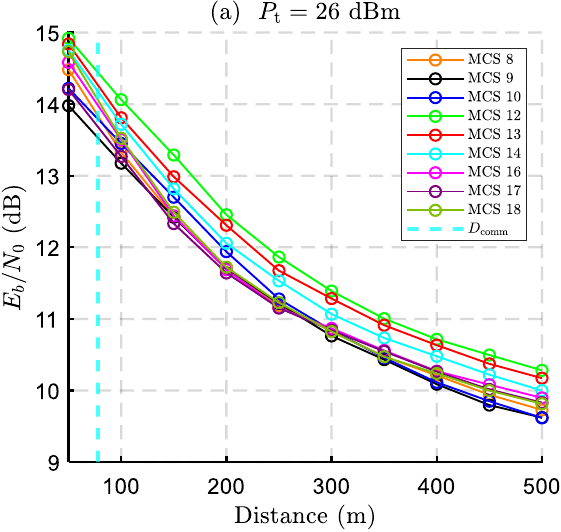}
        \label{fig:15a}
    }
    \hfill
    \subfloat{%
        \includegraphics[width=0.4\textwidth]{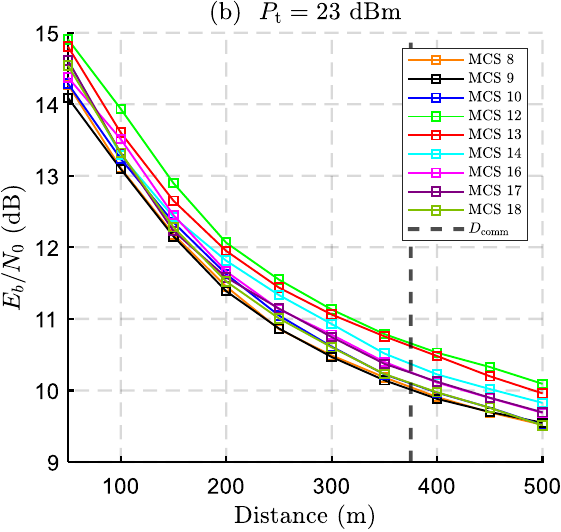}
        \label{fig:15b}
    }
     \caption{\(E_b/N_0\) achieved when (a) \(P_{\mathrm{t}} = 26\)~\si{\dBm}, and (b) \(P_{\mathrm{t}} = 23\)~\si{\dBm}, with an average vehicle speed of $50$ \si{\kilo\metre/\hour} with $\rho$ = $50$ \si{vehicles/\kilo\metre}, under multiple $\mathrm{MCS}$ configurations.}
    \label{23 - 26 dBm Rho 50 EbN0 multi mcs}
\end{figure*}   
    \begin{figure*}[t]
    \centering
    \subfloat{%
        \includegraphics[width=0.4\textwidth]{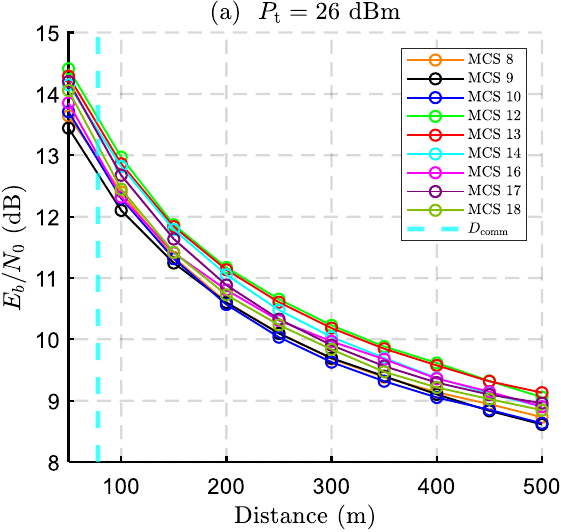}
        \label{fig:16a}
    }
    \hfill
    \subfloat{%
        \includegraphics[width=0.4\textwidth]{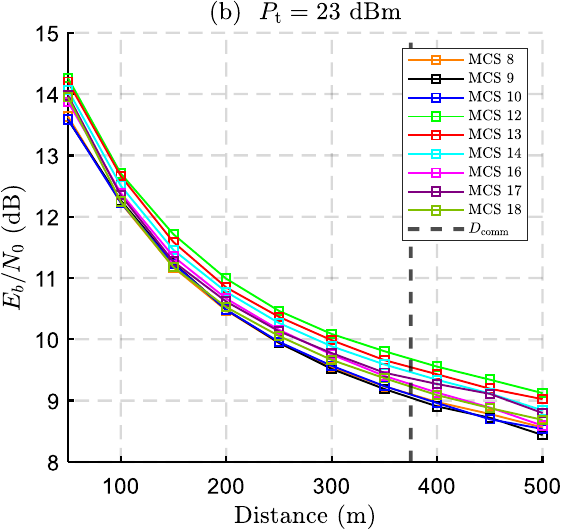}
        \label{fig:16b}
    }
     \caption{\(E_b/N_0\) achieved when (a) \(P_{\mathrm{t}} = 26\)~\si{\dBm}, and (b) \(P_{\mathrm{t}} = 23\)~\si{\dBm}, with an average vehicle speed of $50$ \si{\kilo\metre/\hour} with $\rho$ = $100$ \si{vehicles/\kilo\metre}, under multiple $\mathrm{MCS}$ configurations.}
    \label{23 - 26 dBm Rho 100 EbN0 multi mcs}
\end{figure*}   
	\begin{figure*}[t]
		\centerline{\includegraphics[width=\textwidth]{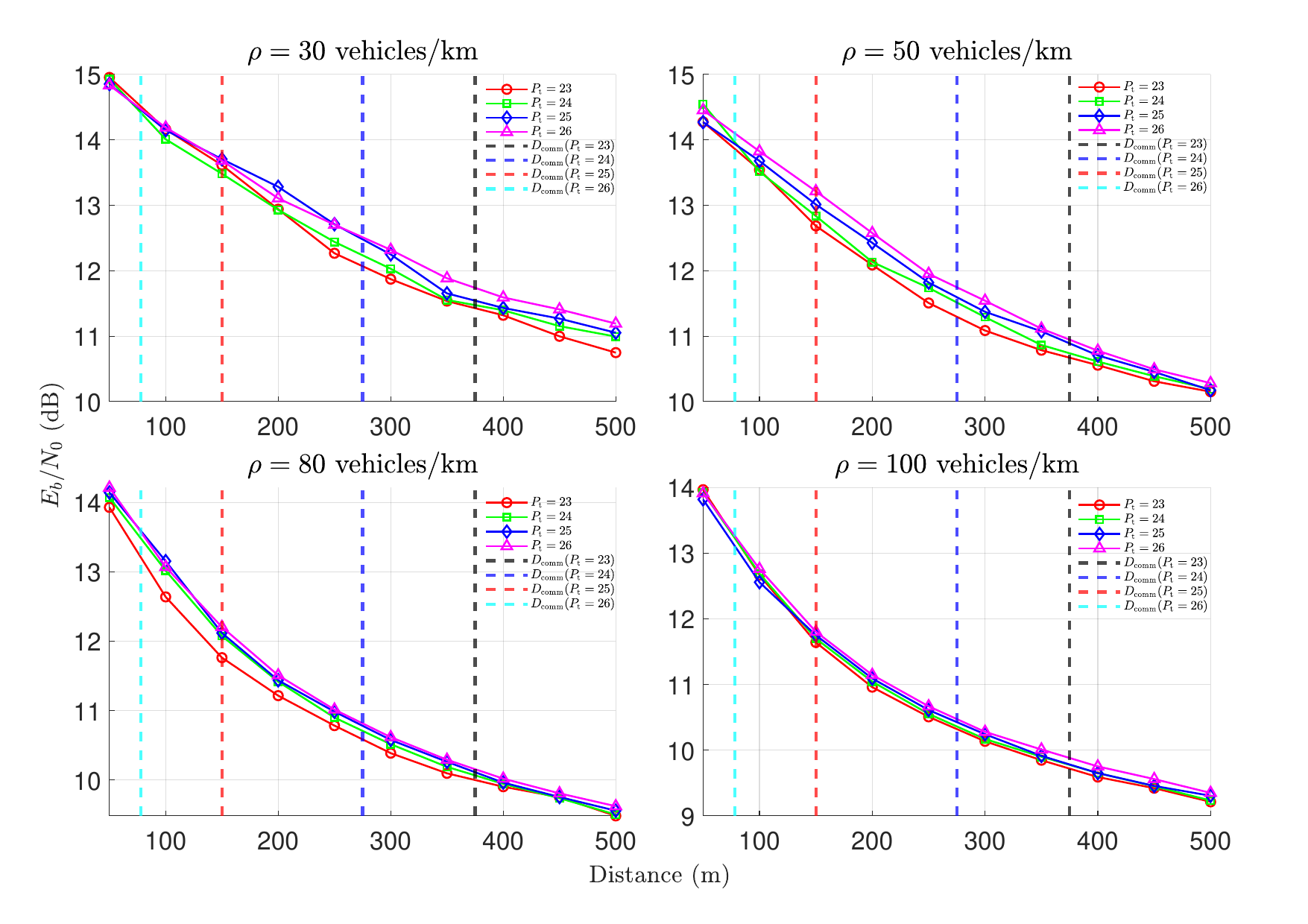}}
		\centering
		\caption{\(E_b/N_0\) achieved when \(P_{\mathrm{t}}\) ranging from $23$ to $26$~\si{\dBm}.}
		\label{23 - 26 dBm EbN0}
	\end{figure*}

\begin{figure}[h]
\centerline{\includegraphics[width=0.5\textwidth]{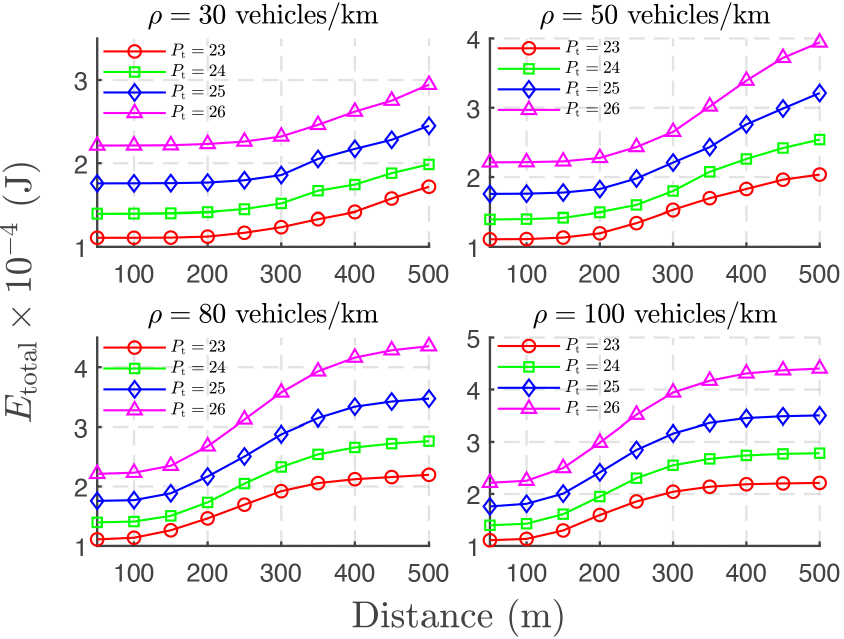}}
\centering
\caption{Total energy spent on a single transmission with truncated HARQ, expressed in $10^{-4}$ \si{J}.}
\label{Etotal}
\end{figure}

\begin{table}[t]
    \centering
    \caption{Total number of transmissions from RSU to all vehicles at \(\mathrm{SCS}=30\), \(\mathrm{MCS}=8\), and $v$ = $50$ \si{\kilo\metre/\hour}}. 
    \label{tab:transmissions}
    \renewcommand{\arraystretch}{1.25}
    \setlength{\tabcolsep}{6pt}
    \begin{tabular}{|c|c|c|c|c|}
        \hline
        \textbf{\(P_{\mathrm{t}}\) (\si{\dBm})} & \textbf{\(\rho=30\)} & \textbf{\(\rho=50\)} & \textbf{\(\rho=80\)} & \textbf{\(\rho=100\)} \\ 
        \hline 
        \hline
        $23$ & $528209$ & $1476832$ & $3721318$ & $5677773$ \\ \hline
        $24$ & $518580$ & $1391392$ & $3630052$ & $5755369$ \\ \hline
        $25$ & $495324$ & $1401655$ & $3683343$ & $5797117$ \\ \hline
        $26$ & $535225$ & $1469037$ & $3665039$ & $5731340$ \\ \hline
    \end{tabular}
\end{table}


\begin{table}[t]
    \centering
    \caption{Maximum RF energy cost ($J$) for delivering packets from \texttt{RSU} to all vehicles at \(\mathrm{SCS}=30\), \(\mathrm{MCS}=8\), and $v$ = $50$ \si{\kilo\metre/\hour}.}
    \label{tab:total_energy_cost}
    \renewcommand{\arraystretch}{1.25}
    \setlength{\tabcolsep}{6pt}
    \begin{tabular}{|c|c|c|c|c|}
        \hline
        \textbf{\(P_{\mathrm{t}}\) (\si{\dBm})} & \textbf{\(\rho=30\)} & \textbf{\(\rho=50\)} & \textbf{\(\rho=80\)} & \textbf{\(\rho=100\)} \\ 
        \hline
        \hline
        $23$ & $90.69$ & $300.68$ & $817.57$ & $1255.92$ \\ \hline
        $24$ & $102.94$ & $353.55$ & $1003.35$ & $1602.30$ \\ \hline
        $25$ & $121.31$ & $450.21$ & $1278.49$ & $2029.57$ \\ \hline
        $26$ & $157.52$ & $579.09$ & $1795.39$ & $2523.51$ \\ \hline
    \end{tabular}
\end{table}

\begin{table}[t]
    \centering
    \caption{Maximum communication energy cost ($J$) for delivering packets from \texttt{RSU} to all vehicles at \(\mathrm{SCS}=30\), \(\mathrm{MCS}=8\), and $v$ = $50$ \si{\kilo\metre/\hour}.}
    \label{tab:total_comms_energy_cost}
    \renewcommand{\arraystretch}{1.25}
    \setlength{\tabcolsep}{6pt}
    \begin{tabular}{|c|c|c|c|c|}
        \hline
        \textbf{\(P_{\mathrm{t}}\) (\si{\dBm})} & \textbf{\(\rho=30\)} & \textbf{\(\rho=50\)} & \textbf{\(\rho=80\)} & \textbf{\(\rho=100\)} \\ 
        \hline
        \hline
        $23$ & $140.52$ & $455.75$ & $1231.70$ & $1889.75$ \\ \hline
        $24$ & $148.30$ & $498.68$ & $1407.17$ & $2244.56$ \\ \hline
        $25$ & $164.05$ & $596.95$ & $1687.16$ & $2675.75$ \\ \hline
        $26$ & $201.46$ & $728.93$ & $2250.75$ & $3161.76$ \\ \hline
    \end{tabular}
\end{table}

	\section{Analysis on Energy Consumption}
    \label{se:analysis}
	This section provides the analysis of RF energy consumption, which enables us to evaluate the impact of $\mathrm{MCS}$ configuration and identify the most suitable code rate. The target is to achieve a good balance between $\mathrm{PRR}$ performance and energy efficiency under both light and heavy traffic conditions.
	
	\subsection{The Impact of MCS on Achieved \(E_b/N_0\)}
    Fig.~\ref{23 - 26 dBm Rho 50 EbN0 multi mcs} and Fig.~\ref{23 - 26 dBm Rho 100 EbN0 multi mcs} show the obtained \(E_b/N_0\) under various $\mathrm{MCS}$ configurations the light and heavy traffic, respectively. The effect of \({P}_{\mathrm{t}}\), ranging from $23$ to $26$ \si{\dBm}, is examined for vehicle densities of $50$ and $100$ vehicles per \si{\kilo\metre}, revealing trade-offs between energy efficiency and communication reliability. As can be seen from Fig.~\ref{23 - 26 dBm Rho 50 EbN0 multi mcs}, under light traffic of $50$ vehicles per \si{\kilo\meter}, increasing \( {P}_{\mathrm{t}} \) is expected to make significant improvements across all $\mathrm{MCS}$ values. However, noticeable gains are only observed at extremely high $\mathrm{MCS}$ levels, $16$ to $18$, which is not energy efficient. For lower $\mathrm{MCS}$ values of $8$ to $10$, the system maintains consistently robust performance, with \(E_b/N_0\) values remaining above $10$~\si{\dB} throughout the range of \(D_{\mathrm{comm}}\), indicating reliable communication quality under the tested conditions. As shown in  Fig.~\ref{fig:15a}, within the \( D_{\text{comm}} \), $\mathrm{MCS}$ $8$ achieves an \(E_b/N_0\) of $10.2$ \si{\dB} at $23$ \si{\dBm} and $10.5$ \si{\dB} at $26$ \si{\dBm}, indicating only a marginal improvement. Compare with Fig.~\ref{fig:15b}, these findings suggest that the use of $\mathrm{MCS}$ $8$ to $10$ with the setting \( {P}_{\mathrm{t}} = 23\) \si{\dBm} is energy efficient and sufficient for light traffic scenarios. In contrast, the worst case is shown in Fig.~\ref{23 - 26 dBm Rho 100 EbN0 multi mcs} which exhibits a tighter clustering of all $\mathrm{MCS}$ curves and a reduced average \(E_b/N_0\). This indicates higher interference and reduced link quality. Moreover, compare between the results shown in Fig.~\ref{fig:16a} and \ref{fig:16b}, the performance gap between $26$ and $23$ \si{\dBm} reduces, and the benefit of using higher $\mathrm{MCS}$ becomes less pronounced. Under such conditions, lower $\mathrm{MCS}$ levels, such as $8$ to $10$ become more resilient and energy efficient. Higher $\mathrm{MCS}$ levels are struggling to maintain sufficient \(E_b/N_0\) beyond $300$ m, which results in increased retransmissions and higher RF energy consumption. Finally, based on the analysis under both light and heavy traffic conditions, a lower level $\mathrm{MCS}$ with a value of $8$, which demonstrates consistently good performance across scenarios. Therefore, $\mathrm{MCS}$ $8$ is selected to analyse the trade-offs between RF energy consumption and safety in the next subsections.

	\subsection{The Impact of $P_t$ on Achieved \(E_b/N_0\)} Fig.~\ref{23 - 26 dBm EbN0}  shows the achieved \(E_b/N_0\) against the distance between \texttt{RSU} and \texttt{Ego}, under various \({P}_{\mathrm{t}}\) values of $23$, $24$, $25$, and $26$ \si{\dBm} and vehicle densities of $30$, $50$, $80$, and $100$ vehicles per \si{\kilo\metre}. The corresponding energy expenditure, measured as the average number of transmissions required to successfully deliver $350$ packets, is summarised in Table~\ref{tab:transmissions}. These results provide insights on the configuration of \({P}_{\mathrm{t}}\) and $\mathrm{MCS}$ to minimise RF energy consumption while maintaining reliable V2X connectivity in ruralenvironments. According to Fig.~\ref{23 - 26 dBm EbN0}, the results reveal a clear trade-off between energy efficiency and communication reliability across vehicle density scenarios. At low densities (e.g., $30$ to $50$ vehicles per \si{\kilo\meter}), reducing \({P}_{\mathrm{t}}\) from $26$ to $23$ \si{\dBm} leads to only minor reductions in \(E_b/N_0\), particularly in the short to medium ranges. For example, the loss is approximately $1.5$ \si{\dB} at $100$ \si{\meter} and remains within a tolerable margin at $200$ \si{\meter} for stable packet reception. These findings indicate that employing a lower \({P}_{\mathrm{t}}\), such as $23$ \si{\dBm}, is sufficient in low-density traffic to conserve energy without compromising communication reliability, thus improving the efficiency of V2X systems in rural scenarios.
	
	In contrast, an increase in \({P}_{\mathrm{t}}\) from $23$ to $26$ \si{\dBm} provides substantial performance gains under high vehicle densities ($80$ to $100$ vehicles per \si{\kilo\meter}), mainly by extending the \(D_{\mathrm{comm}}\) needed for stable connected braking. At $250$ \si{\meter}, the increase in \(E_b/N_0\) is about $3$ \si{\dB}, which improves packet reception reliability and limits interference-induced retransmissions. These results demonstrate that higher \({P}_{\mathrm{t}}\), such as $26$ \si{\dBm}, is essential in congested traffic to overcome channel degradation, reduce packet loss, and preserve the safety margin required for V2X safety-critical operations.

\subsection{Analysis on Energy Consumption} 
   
   Fig.~\ref{Etotal} illustrates the total transmission energy required to deliver all packets from the \texttt{RSU} to the \texttt{Ego}, \(E_{\mathrm{total}}\) against the distance, under various \({P}_{\mathrm{t}}\) values of $23$, $24$, $25$, and $26$ \si{\dBm} and vehicle densities of $30$, $50$, $80$, and $100$ vehicles per \si{\kilo\metre}. In all traffic scenarios, \(E_{\mathrm{total}}\) increases monotonically with distance, consistent with reduced $\mathrm{PRR}$ and shrinking link margin at longer ranges. For any fixed vehicle density and distance, a higher \(P_{\mathrm{t}}\) yields higher energy per packet, while \(P_{\mathrm{t}}=23~\mathrm{dBm}\) is the most energy efficient. Under an increased vehicle density such as $80$ and $100$ vehicles per \si{\kilo\metre}, the channel is busier with increased interference, which shifts the curves upward and steepens their slope beyond $200$ and $300$ \si{m}. This effect is most evident for heavy traffic, such as $80$ and $100$ per \si{\kilo\metre}, where the gap between low and high \(P_{\mathrm{t}}\) widens with distance. These trends support density and distance-aware power control, under lighter traffic of $30$ and $50$ vehicles per \si{\kilo\metre}, \(P_{\mathrm{t}}=23\) to \(24~\mathrm{dBm}\) is adequate for several hundred metres, under heavier traffic of $80$ and $100$ vehicles per \si{\kilo\metre}, increasing \(P_{\mathrm{t}}\) should be reserved for operation near the respective communication limit to meet reliability while avoiding unnecessary energy expenditure.
    
Table~\ref{tab:transmissions} illustrates the averaged total generated transmissions from \texttt{RSU} to all the vehicles, based on increasing \({P}_{\mathrm{t}}\). As shown in Table~\ref{tab:transmissions}, the total number of transmissions does not strictly decrease monotonically with \(P_{\mathrm{t}}\). This behaviour is due to randomness in the simulation, caused by shadowing fading, traffic mobility, and random scheduling of retransmissions, all of which introduce small fluctuations in aggregate transmission counts. However, a general decreasing trend can still be observed in the results. Under light traffic conditions ($30$ and $50$ vehicles per \si{\kilo\meter}), increasing \(P_{\mathrm{t}}\) from $23$ to $25$ \si{dBm} decreased the total number of transmissions for about $30000$. A similar pattern is observed under heavy traffic ($30$ and $50$ vehicles per \si{\kilo\meter}), where increasing \(P_{\mathrm{t}}\) from $25$ to \(26~\mathrm{dBm}\) provides moderate reliability improvements while reducing the retransmission overhead.

    Table~\ref{tab:total_energy_cost} illustrates the maximum RF energy for delivering all packets from \texttt{RSU} to all vehicles. Under light and moderate traffic densities of $30$ and $50$ vehicles per \si{\kilo\meter}, the increase in \({P}_{\mathrm{t}}\) from $23$ to $26$~\si{\dBm} results in a more gradual increase in total energy from \(90.6\) to \(157.52~\mathrm{J}\) and from \(300.68\) to \(579.09~\mathrm{J}\), representing approximately a $74$\% and $93$\% rise, respectively. This shows that in sparse environments, higher \({P}_{\mathrm{t}}\) can improve coverage and $\mathrm{PRR}$ with only a moderate increase in RF energy consumption. However, once traffic density exceeds a critical threshold (e.g. $80$ vehicles per \si{\kilo\meter}), the marginal energy gain per successfully delivered packet decreases sharply. In contrast to light traffic conditions, where moderate increases in \({P}_{\mathrm{t}}\) yield energy-efficient gains, heavy traffic scenarios exhibit a different trade-off. At $\rho = 100$, raising \({P}_{\mathrm{t}}\) from $23$ to $26$~\si{\dBm} increases the total number of transmissions and consequently leads to a substantial increase in total RF energy consumption, from about \(1255.92~\mathrm{J}\) to \(2523.51~\mathrm{J}\), the growth in total energy cost is much more pronounced than light traffics, approximately $101$\%. This steep growth indicates that in dense vehicular networks, additional \({P}_{\mathrm{t}}\) has proportionally improved communication reliability, but also amplified the overall energy cost to deliver all packets by \texttt{RSU}. Table~\ref{tab:total_comms_energy_cost} shows the total communication energy expense with various values of \({P}_{\mathrm{t}}\). The results show that RF energy still contributes more than $60\%$ of the overall communication energy across the considered configurations. This indicates that RF energy consumption remains a dominant and controllable component, and therefore motivates focusing on selecting appropriate communication control factors to reduce marginal RF energy while maintaining reliability in rural V2X scenarios.
    
   The results highlight the trade-off between communication reliability and energy efficiency in V2X networks. Increasing \({P}_{\mathrm{t}}\) effectively extends \(D_{\mathrm{comm}}\) and enhances the PRR, thus improving transmission reliability and road safety. However, this improvement shows inconsistency, which can be attributed to random fluctuations of shadowing and traffic layouts during simulations. Consequently,  total RF energy consumption increases sharply with \({P}_{\mathrm{t}}\), in light traffic scenarios, a lower \({P}_{\mathrm{t}}\) of $23$ \si{\dBm} is sufficient to maintain effective communication while reducing energy costs. However, in high-density scenarios, a higher \({P}_{\mathrm{t}}\) of $26$ \si{\dBm} is effective in extending the scenario \( D_{\mathrm{comm}} \) to improve safety performance, but this cost of energy from excessive power needs to be considered, as it does not effectively reduce the total number of transmissions. These findings suggest that, while increasing \({P}_{\mathrm{t}}\) improves connectivity and safety performance, it does so at the expense of energy efficiency. Hence, a balance must be achieved between reliable communication and sustainable energy use. Therefore, an adaptive power control mechanism that dynamically adjusts \({P}_{\mathrm{t}}\) according to vehicle density is essential to improve both safety performance and overall RF energy efficiency in future rural V2X systems.

\section{Conclusions}
\label{se:conclusion}

 This study investigates the impact of three key communication control factors: $\mathrm{SCS}$, \({P}_{\mathrm{t}}\), and $\mathrm{MCS}$ on safety-relevant performance metrics and RF energy consumption for rural connected braking. We evaluated $\mathrm{PRR}$ and \(D_{\mathrm{comm}}\) under diverse traffic conditions, followed by an energy consumption analysis. The results demonstrate that, an $\mathrm{SCS}$ of $30$ \si{\kilo\hertz} combined with $\mathrm{MCS}$ 8 consistently ensures reliable $\mathrm{PRR}$ at the safety threshold distance \(D_{\mathrm{comm}}\). Under light traffic, where interference is limited, maintaining a lower \({P}_{\mathrm{t}}\) is more energy efficient while still preserving safety margins at \(D_{\mathrm{comm}}\). Under heavy traffic, increasing \({P}_{\mathrm{t}}\) from $23$ \si{\dBm} to $26$ \si{\dBm} can effectively improve safety by increasing $\mathrm{PRR}$ and \(D_{\mathrm{comm}}\). However, this improvement comes at the cost of more than a $100$\% increase in total energy expenditure to deliver all packets from the \texttt{RSU} to every vehicle, although it provides substantial gains in communication reliability and enables support for higher vehicle densities. These findings underscore the importance of adaptive control of three communication factors to balance the safety and energy consumption of rural connected brakes. Overall, the findings confirm that adaptive configuration of \({P}_{\mathrm{t}}\), $\mathrm{SCS}$, and $\mathrm{MCS}$ is essential to balance energy efficiency with safety-critical performance in rural V2X systems. By explicitly incorporating \(D_{\mathrm{comm}}\) as a design metric, this work provides a detail analysis on performance evaluation of communication control factors to ensuring a safe, energy-efficient operation of connected braking applications in rural traffic conditions, in order to bring useful insights to the future rural CAV developments. A future research direction is the development of adaptive configuration mechanisms that dynamically adjust communication control factors according to real-time traffic and channel conditions in rural environments.

	\section*{Acknowledgements}
	For the purpose of open access, the author has applied a Creative Commons Attribution (CC BY) license to any Author Accepted Manuscript version arising from this submission.
	
	\bibliographystyle{IEEEtran}
	\bibliography{references}

@ARTICLE{R2C1_4,
  author={Xiaoqin, Song and Juanjuan, Miao and Lei, Lei and Tianchen, Zhang},
  journal={IEEE Access}, 
  title={Maximum-Throughput Sidelink Resource Allocation for {NR-V2X} Networks With the Energy-Efficient {CSI} Transmission}, 
  year={Mar. 2020},
  volume={8},
  number={},
  pages={73164-73172},
  keywords={Resource management;Vehicle-to-everything;Sensors;Copper;Optimization;Interference;5G mobile communication;New radio vehicle-to-everything (NR-V2X);channel state information (CSI);subcarrier assignment;power allocation},
  doi={10.1109/ACCESS.2020.2983715}}

@INPROCEEDINGS{R2C1_3,
  author={Yin, Qingbao},
  booktitle={Proc. IEEE Int. Conf. Comput., Inf. Process. Adv. Educ. (CIPAE)}, 
  title={Resource Optimization and Energy Consumption Management Strategy of {5G} Communication Network Using Reinforcement Learning}, 
  year={Aug. 2024},
  volume={},
  number={},
  pages={699-703},
  keywords={Training;Wireless communication;Energy consumption;NOMA;5G mobile communication;Heuristic algorithms;Dynamic scheduling;Resource management;Communication networks;Optimization;reinforcement learning;5G communication;network resource optimization;energy consumption management},
  doi={10.1109/CIPAE64326.2024.00132}}

@ARTICLE{R2C1_2,
  author={Zheng, Canjian and Feng, Daquan and Zhang, Shengli and Xia, Xiang-Gen and Qian, Gongbin and Li, Geoffrey Ye},
  journal={IEEE Trans. Veh. Technol.}, 
  title={Energy Efficient {V2X}-Enabled Communications in Cellular Networks}, 
  year={Jan. 2019},
  volume={68},
  number={1},
  pages={554-564},
  keywords={Delays;Resource management;Relays;Energy consumption;Optimization;Reliability;Wireless networks;V2X communications;energy efficiency;energy delay tradeoff},
  doi={10.1109/TVT.2018.2882127}}

@ARTICLE{R2C1_1,
  author={Park, Hanyoung and Jang, Yongjae and Ko, Kanghyun and Choi, Ji-Woong},
  journal={IEEE Access}, 
  title={Energy Consumption Analysis of {5G C-V2X} Sensor Sharing for Tele-Operated Driving}, 
  year={Mar. 2025},
  volume={13},
  number={},
  pages={42547-42558},
  keywords={Energy consumption;Vehicle-to-everything;Sidelink;Autonomous vehicles;Transceivers;Radio frequency;Streaming media;Resource management;Reliability;Real-time systems;Unmanned mobility applications;tele-operated driving (ToD);sensor sharing;cellular vehicle-to-everything (C-V2X);energy consumption},
  doi={10.1109/ACCESS.2025.3548116}}

@ARTICLE{iid1,
  author={Ahmed, Ashfaq and Al-Dweik, Arafat and Iraqi, Youssef and Mukhtar, Husameldin and Naeem, Muhammad and Hossain, Ekram},
  journal={IEEE Commun. Surv. Tutor.}, 
  title={Hybrid Automatic Repeat Request ({HARQ}) in Wireless Communications Systems and Standards: A Contemporary Survey}, 
  year={Jul. 2021},
  volume={23},
  number={4},
  pages={2711-2752},
  keywords={Wireless communication;Forward error correction;Quality of service;Automatic repeat request;Standards;Reliability;Receivers;Wireless communications systems;retransmission protocols;automatic repeat request (ARQ);hybrid ARQ (HARQ)},
  doi={10.1109/COMST.2021.3094401}}

@ARTICLE{3gpp36885,
  author = {3GPP},
  journal = {3GPP},
  title = {{3GPP TR} 36.885 Version 14.0.0 Release 14: Study on {LTE}-based {V2X} Services},
  year = {Jul. 2016}
}

@ARTICLE{V2XRA3,
  author={Sehla, Khabaz and Nguyen, Thi Mai Trang and Pujolle, Guy and Velloso, Pedro Braconnot},
  journal={IEEE Internet Things J}, 
  title={Resource Allocation Modes in {C-V2X}: From {LTE-V2X} to {5G-V2X}}, 
  year={Mar. 2022},
  volume={9},
  number={11},
  pages={8291-8314},
  keywords={Resource management;Vehicle-to-everything;Protocols;Standards;Sensors;3GPP;Wireless fidelity;Cellular Vehicle to Everything (C-V2X);long-term evolution-V2X (LTE-V2X);new radio-V2X (NR-V2X);resource allocation (RA) modes},
  doi={10.1109/JIOT.2022.3159591}}

@ARTICLE{V2Xsurvey,
  author={Garcia, Mario H. Castañeda and Molina-Galan, Alejandro and Boban, Mate and Gozalvez, Javier and Coll-Perales, Baldomero and Şahin, Taylan and Kousaridas, Apostolos},
  journal={IEEE Commun. Surv. Tutor}, 
  title={A Tutorial on {5G NR V2X} Communications}, 
  year={Feb. 2021},
  volume={23},
  number={3},
  pages={1972-2026},
  keywords={Vehicle-to-everything;5G mobile communication;Standards;Long Term Evolution;3GPP;Uplink;Tutorials;5G NR V2X;5G V2X;3GPP;release 16;sidelink;vehicle-to-everything;V2X;5G;New Radio;5G NR;LTE V2X;cellular V2X;C-V2X;connected and automated vehicles;CAV;connected and automated driving},
  doi={10.1109/COMST.2021.3057017}}

@INPROCEEDINGS{BSMs,
  author={Su, Gregory T and Rajkumar, Ragunathan Raj},
  booktitle={2025 IEEE 101st Veh. Technol. Conf. (VTC-Spring)}, 
  title={Enhanced Safety Messages ({ESM}): A Practical Alternative to {V2X} Basic Safety Messages}, 
  year={Jun. 2025},
  volume={},
  number={},
  pages={1-7},
  keywords={Global navigation satellite system;Vehicular and wireless technologies;Accuracy;Connected vehicles;Roads;Redundancy;Vehicle driving;Safety;Vehicle-to-everything;Autonomous vehicles;Autonomous Vehicles;V2E;V2V;C-V2X;BSM},
  doi={10.1109/VTC2025-Spring65109.2025.11174658}}

@INPROCEEDINGS{V2XRA-SCS,
  author={Wang, Donglin and Saraci, Oneza and Sattiraju, Raja R. and Zhou, Qiuheng and Schotten, Hans D.},
  booktitle={Proc. 2022 IEEE 8th Int. Conf. Comput. Commun. (ICCC)}, 
  title={Effect of Variable Physical Numerologies on Link-Level Performance of {5G} {NR} {V2X}}, 
  year={Dec. 2022},
  volume={},
  number={},
  pages={291-296},
  keywords={Wireless communication;5G mobile communication;Simulation;Symbols;Physical layer;Safety;3GPP;V2X;LTE;5G NR;Numerology;Network simulation},
  doi={10.1109/ICCC56324.2022.10065622}}

@INPROCEEDINGS{V2XRA-MCS-NR1,
  author={Méndez-Monsanto, Lianet and MacQuarrie, Abigail and Ghourtani, Mostafa Rahmani and López Morales, Manuel J. and Armada, Ana García and Burr, Alister},
  booktitle={Proc. 2024 IEEE 100th Veh. Technol. Conf. (VTC-Fall)}, 
  title={{BLER-SNR} Curves for {5G} {NR} {MCS} under {AWGN} Channel with Optimum Quantization}, 
  year={Oct. 2024},
  volume={},
  number={},
  pages={1-6},
  keywords={AWGN channels;Vehicular and wireless technologies;Quantization (signal);5G mobile communication;Modulation;Open RAN;Physical layer;Parity check codes;Standards;Signal to noise ratio;5G NR;quantization;lookup tables;MCS;PHY abstraction},
  doi={10.1109/VTC2024-Fall63153.2024.10757540}}

@INPROCEEDINGS{V2XRA-MCS-NR2,
  author={Yan, Jin and Härri, Jérôme},
  booktitle={Proc. 2022 IEEE Intell. Vehicles Symp. (IV)}, 
  title={{MCS} Analysis for {5G-NR} {V2X} Sidelink Broadcast Communication}, 
  year={Jun. 2022},
  volume={},
  number={},
  pages={1347-1352},
  keywords={Phase shift keying;5G mobile communication;Spectral efficiency;Modulation;New Radio;Encoding;Vehicle-to-everything},
  doi={10.1109/IV51971.2022.9827311}}

@ARTICLE{V2XRA-Pt,
  author={Jameel, Furqan and Khan, Wali Ullah and Kumar, Neeraj and Jäntti, Riku},
  journal={IEEE Trans. Intell. Transp. Syst.}, 
  title={Efficient Power-Splitting and Resource Allocation for Cellular {V2X} Communications}, 
  year={Jun. 2021},
  volume={22},
  number={6},
  pages={3547-3556},
  keywords={Vehicle-to-everything;Resource management;Optimization;Traffic control;Electric vehicles;Reliability;Roads;Energy efficiency;resource block assignment;vehicle-to-everything (V2X) communications;wireless power transfer},
  doi={10.1109/TITS.2020.3001682}}

@ARTICLE{V2XRA-Pt2,
  author={Xiao, Hailin and Zhu, Dan and Chronopoulos, Anthony Theodore},
  journal={IEEE Trans. Intell. Transp. Syst.}, 
  title={Power Allocation With Energy Efficiency Optimization in Cellular D2D-Based V2X Communication Network}, 
  year={Dec. 2020},
  volume={21},
  number={12},
  pages={4947-4957},
  keywords={Resource management;Device-to-device communication;Vehicle-to-everything;Interference;Optimization;Communication networks;Power demand;V2X communication;device-to-device (D2D);power allocation;energy efficiency (EE);energy harvesting (EH)},
  doi={10.1109/TITS.2019.2945770}}

@INPROCEEDINGS{EnergySCS,
  author={Farhadi, Vajiheh and La Porta, Thomas and He, Ting},
  booktitle={Proc. 2023 IEEE 20th Int. Conf. Mobile Ad Hoc Smart Syst. (MASS)}, 
  title={{5G} multi-numerology applications in power distribution systems}, 
  year={Sep. 2023},
  volume={},
  number={},
  pages={1-9},
  keywords={Technological innovation;Renewable energy sources;5G mobile communication;Scalability;Power distribution;Quality of service;Sensor phenomena and characterization;5G;network slicing;multi numerology;SCADA system;power distribution system},
  doi={10.1109/MASS58611.2023.00009}}

@ARTICLE{SCSMCS,
  author={Cebecioglu, Berna Bulut and Mo, Yuen Kwan and Dinh-Van, Son and Fowler, Daniel S. and Evans, Alex and Sivanathan, Aparajithan and Kampert, Erik and Ahmad, Bilal and Higgins, Matthew D.},
  journal={IEEE Access}, 
  title={Sub-6 {GHz} Channel Modeling and Evaluation in Indoor Industrial Environments}, 
  year={Oct. 2022},
  volume={10},
  number={},
  pages={127742-127753},
  doi={10.1109/ACCESS.2022.3227052}}

@ARTICLE{EnergySCS2,
  author={Shen, Li-Hsiang and Wu, Pei-Ying and Feng, Kai-Ten},
  journal={IEEE Trans. Wireless Commun.}, 
  title={Energy Efficient Resource Allocation for Multinumerology Enabled Hybrid Services in {B5G} Wireless Mobile Networks}, 
  year={Mar. 2023},
  volume={22},
  number={3},
  pages={1712-1729},
  keywords={Resource management;Manganese;Interference;Quality of service;Delays;Bandwidth;Indexes;B5G;multi-numerology;enhanced mobile broadband (eMBB);ultra-reliable and low latency communications (URLLC);resource allocation;energy efficiency;user mobility},
  doi={10.1109/TWC.2022.3206589}}

@INPROCEEDINGS{EnergyMCS,
  author={Arunruangsirilert, Kasidis and Wongprasert, Pasapong and Katto, Jiro},
  booktitle={Proc. IEEE Int. Conf. Commun. (ICC), 2025}, 
  title={Performance Analysis of {5G} {FR2} (mmWave) Downlink {256QAM} on Commercial {5G} Networks}, 
  year={Jun. 2025},
  volume={},
  number={},
  pages={741-746},
  keywords={Wireless communication;Urban areas;Modulation;Ultra reliable low latency communication;Downlink;Throughput;New Radio;Distortion;Millimeter wave communication;Radio spectrum management;5G New Radio (NR);5G Frequency Range 2 (mmWave);Downlink 256QAM;Radio Access Network;Wireless Communications},
  doi={10.1109/ICC52391.2025.11161113}}

@ARTICLE{3GPP,
  author={Ali, Zoraze and Lagén, Sandra and Giupponi, Lorenza and Rouil, Richard},
  journal={IEEE Access}, 
  title={3GPP {NR} {V2X} Mode 2: Overview, Models and System-Level Evaluation}, 
  year={Jun. 2021},
  volume={9},
  number={},
  pages={89554-89579},
  keywords={Unicast;Transmitters;Throughput;Sensors;Safety;3GPP;History;Vehicular communications;3GPP;NR V2X;autonomous resource selection;network simulations},
  doi={10.1109/ACCESS.2021.3090855}}

@ARTICLE{WiLab3,
  author={Hyeon, Doyeon and Lee, Chaeyeong and Kim, Heemin and Cho, Sungrae and Paek, Jeongyeup and Govindan, Ramesh},
  journal={J. Commun. Netw.}, 
  title={Probabilistic resource rescheduling for {C-V2X} based on delivery rate estimation}, 
  year={Apr. 2024},
  volume={26},
  number={2},
  pages={239-251},
  keywords={Sensors;Rail to rail inputs;Vehicle-to-everything;Probabilistic logic;Standards;Autonomous vehicles;Wireless sensor networks;Cellular-V2X (C-V2X);Mode 4;SB-SPS;vehicle-to-everything communication (V2X);vehicular network},
  doi={10.23919/JCN.2024.000006}}

@article{WiLab2,
title = {Performance analysis of {IEEE} 802.11p preamble insertion in {C-V2X} sidelink signals for co-channel coexistence},
journal = {Veh. Commun.},
volume = {45},
pages = {100710},
year = {Feb. 2024},
issn = {2214-2096},
doi = {https://doi.org/10.1016/j.vehcom.2023.100710},
author = {Alessandro Bazzi and Stefania Bartoletti and Alberto Zanella and Vincent Martinez},
keywords = {Connected vehicles, IEEE 802.11p, C-V2X sidelink, Autonomous mode, Coexistence, Spectrum sharing},
}

@ARTICLE{WiLab1,
  author={Todisco, Vittorio and Bartoletti, Stefania and Campolo, Claudia and Molinaro, Antonella and Berthet, Antoine O. and Bazzi, Alessandro},
  journal={IEEE Access}, 
  title={Performance Analysis of Sidelink {5G-V2X} Mode 2 Through an Open-Source Simulator}, 
  year={Oct. 2021},
  volume={9},
  number={},
  pages={145648-145661},
  keywords={Vehicle-to-everything;Resource management;OFDM;Open source software;3GPP;5G mobile communication;Bandwidth;5G-V2X;new radio;sidelink;connected vehicles;autonomous resource allocation;Mode 2;open-source simulation;LTEV2Vsim;WiLabV2Xsim},
  doi={10.1109/ACCESS.2021.3121151}}

@INPROCEEDINGS{energy1,
  author={Zhu, Yuxin and Zhu, Lin and Yang, Wenqi and Gong, Yi and Zhao, Junhui},
  booktitle={2024 16th Int. Conf. Wirel. Commun. Signal Process. (WCSP)}, 
  title={Energy-Efficient Spectrum Sharing for Cellular {V2X} Communications}, 
  year={Oct. 2024},
  volume={},
  number={},
  pages={632-636},
  keywords={Wireless communication;Simulation;Signal processing algorithms;Vehicular ad hoc networks;Quality of service;Interference;Energy efficiency;Optimization;Genetic algorithms;Convergence;Cellular V2X communication;energy efficiency;spectrum sharing;genetic algorithm;interference management},
  doi={10.1109/WCSP62071.2024.10827175}}

@INPROCEEDINGS{MCS,
  author={Chatzoulis, Dimitrios and Chaikalis, Costas and Kosmanos, Dimitrios and Xenakis, Apostolos and Anagnostou, Kostas},
  booktitle={2023 8th South-East Eur. Des. Autom., Comput. Eng., Comput. Netw. Soc. Media Conf. (SEEDA-CECNSM)}, 
  title={Vehicle Density in mmWave {5G V2X} Highway Communication Systems: A Channel Coding Approach}, 
  year={Nov. 2023},
  volume={},
  number={},
  pages={1-5},
  keywords={Road transportation;5G mobile communication;Heuristic algorithms;Computational modeling;Quality of service;Parity check codes;Data models;V2X;3GPP;5G;mmWave;channel coding;vehicle density},
  doi={10.1109/SEEDA-CECNSM61561.2023.10470852}}

@INPROCEEDINGS{SCS,
  author={Marijanović, Ljiljana and Schwarz, Stefan and Rupp, Markus},
  booktitle={2018 IEEE Int. Conf. Commun. Syst. (ICCS)}, 
  title={Optimal Resource Allocation with Flexible Numerology}, 
  year={Dec. 2018},
  volume={},
  number={},
  pages={136-141},
  keywords={Optimal resource allocation;multi-user scenario;mixed numerology;imperfect channel knowledge;interband interference},
  doi={10.1109/ICCS.2018.8689253}}

@article{sidelinkv2x,
title = {{5G NR} sidelink time domain based resource allocation in {C-V2X}},
journal = {Veh. Commun.},
volume = {53},
pages = {100902},
year = {Jun. 2025},
issn = {2214-2096},
doi = {https://doi.org/10.1016/j.vehcom.2025.100902},
author = {Mehnaz Tabassum and Aurenice Oliveira},
keywords = {Cellular V2X, 5G V2X, Side-link, 5G NR, 3GPP, Connected and automated vehicles},
}

@article{sidelinkv2v,
title = {Exploring {V2X} in {5G} networks: A comprehensive survey of location-based services in hybrid scenarios},
journal = {Veh. Commun.},
volume = {52},
pages = {100878},
year = {Apr. 2025},
issn = {2214-2096},
doi = {https://doi.org/10.1016/j.vehcom.2025.100878},
author = {Bruno Mendes and Marco Araújo and Adriano Goes and Daniel Corujo and Arnaldo S.R. Oliveira},
keywords = {NR-U, V2X ITS, C-V2X, CAM, GNSS, Hybrid Positioning},
}

@INPROCEEDINGS{DSRCplatooning,
  author={Li, Yongfu and Chen, Wenbo and Zhang, Kaibi and Zheng, Taixiong and Feng, Huizong},
  booktitle={2017 29th Chin. Control Decis. Conf. (CCDC)}, 
  title={{DSRC} based vehicular platoon control considering the realistic {V2V/V2I} communications}, 
  year={May. 2017},
  volume={},
  number={},
  pages={7585-7590},
  keywords={Merging;Roads;Global Positioning System;Centralized control;Protocols;Meters;Trajectory;Vehicular platoon control;V2V/V2I communications;DSRC;OBU;RSU},
  doi={10.1109/CCDC.2017.7978560}}

@ARTICLE{v2xchallenge,
  author={Sehla, Khabaz and Nguyen, Thi Mai Trang and Pujolle, Guy and Velloso, Pedro Braconnot},
  journal={IEEE Internet Things J.}, 
  title={Resource Allocation Modes in {C-V2X}: From {LTE-V2X} to {5G-V2X}}, 
  year={Jun. 2022},
  volume={9},
  number={11},
  pages={8291-8314},
  keywords={Resource management;Vehicle-to-everything;Protocols;Standards;Sensors;3GPP;Wireless fidelity;Cellular Vehicle to Everything (C-V2X);long-term evolution-V2X (LTE-V2X);new radio-V2X (NR-V2X);resource allocation (RA) modes},
  doi={10.1109/JIOT.2022.3159591}}

@ARTICLE{Scenario,
  author={Xia, Sen and Min, Haigen and Wu, Xia and Fang, Yukun and Wang, Wuqi and Xu, Zhigang and Zhao, Xiangmo},
  journal={IEEE Trans. Veh. Technol.}, 
  title={On-ramp Merging Strategy Integrating Dynamic Lane-changing Decision and Quasi-uniform B-spline Trajectory Optimization}, 
  year={Dec. 2024},
  volume={},
  number={},
  pages={1-16},
  keywords={Merging;Collaboration;Trajectory optimization;Vehicle dynamics;Splines (mathematics);Safety;Roads;connected autonomous vehicles;on-ramp merging;dynamic lane-changing decision;trajectory optimization},
  doi={10.1109/TVT.2024.3445958}}

@INPROCEEDINGS{5GVSDSRCHOLOGRAMS,
  author={Thomas, M. and Edwards, R. M. and Wang, Z.},
  booktitle={Proc. 12th Eur. Conf. Antennas Propag. (EuCAP)}, 
  title={Consideration of {IEEE} 802.11p and proposed {5G} for holograms in vehicular communication}, 
  year={Apr. 2018},
  volume={},
  number={},
  pages={1-5},
  keywords={DSRC;802.11p 5G;D2D;Hologram;3D Graphics},
  doi={10.1049/cp.2018.1245}}

@INPROCEEDINGS{ZhanleIVs,
  author={Zhao, Zhanle and Mo, Yuen Kwan and Zhang, Xizhe and Khastgir, Siddartha and Higgins, Matthew D.},
  booktitle= {Proc. IEEE Intell. Veh. Symp. (IV)}, 
  title={An Analysis on the Minimum Communication Distance for Safe Connected Brakes in Rural {LOS} Scenarios under {IEEE} 802.11p}, 
  year={Jun. 2024},
  volume={},
  number={},
  pages={1836-1840},
  keywords={Transmitters;Intelligent vehicles;Error analysis;Simulation;Roads;Vehicular ad hoc networks;Delays},
  doi={10.1109/IV55156.2024.10588784}}

@ARTICLE{Humanjiangbrake,
  author={Hu, Manjiang and Li, Junni and Bian, Yougang and Wang, Jianqiang and Xu, Biao and Zhu, Yanan},
  journal={IEEE Trans. Intell. Veh.}, 
  title={Distributed Coordinated Brake Control for Longitudinal Collision Avoidance of Multiple Connected Automated Vehicles}, 
  year={Jan. 2023},
  volume={8},
  number={1},
  pages={745-755},
  keywords={Brakes;Safety;Topology;Collision avoidance;Vehicle dynamics;Aerodynamics;Power system dynamics;Connected automated vehicle;collision avoidance;coordinated brake control;distributed optimization;relative kinetic energy},
  doi={10.1109/TIV.2022.3197951}}

@INPROCEEDINGS{Jason,
  author={Zhang, Xizhe and Mo, Yuen Kwan and Chodowiec, Emil and Tang, Yun and Higgins, Matthew and Khastgir, Siddartha and Jennings, Paul},
  booktitle={2023 IEEE Int. Conf. Syst., Man Cybern. (SMC)}, 
  title={A Novel Scenario-Based Testing Approach for Cooperative-Automated Driving Systems}, 
  year={Oct. 2023},
  volume={},
  number={},
  pages={3487-3494},
  keywords={Pedestrians;Benchmark testing;Safety;Bayes methods;Vehicle-to-everything;Optimization;Engines;V2X;Connectivity;Scenario-based testing;ADS;Safety},
  doi={10.1109/SMC53992.2023.10394032}}

@INPROCEEDINGS{ToDSRCor5G,
  author={Elbery, Ahmed and Sorour, Sameh and Hassanein, Hossam and Sediq, Akram Bin and Abou-zeid, Hatem},
  booktitle={Proc. IEEE Glob. Commun. Conf. (GLOBECOM)}, 
  title={To {DSRC} or {5G}? A Safety Analysis for Connected and Autonomous Vehicles}, 
  year={Dec. 2021},
  volume={},
  number={},
  pages={1-6},
  keywords={Measurement;Telecommunication traffic;Kinematics;New Radio;Safety;Reliability;Global communication;CAV;Safety;DSRC;5G-NR;Simulation},
  doi={10.1109/GLOBECOM46510.2021.9685065}}

@ARTICLE{PERinWCL,
  author={Zhang, Xiaokang and Wang, Hongwei and Feng, Weiyang and Lin, Siyu},
  journal={IEEE Wirel. Commun. Lett.}, 
  title={Vehicle Environment Awareness-Based Messages Transmission Frequency Optimization in {C-V2X}}, 
  year={Jul. 2023},
  volume={12},
  number={7},
  pages={1116-1119},
  keywords={Time-frequency analysis;Optimization;Estimation;Reliability;Channel estimation;Frequency control;Load modeling;C-V2X;autonomous mode;message transmission frequency;channel busy ratio},
  doi={10.1109/LWC.2023.3247192}}

@ARTICLE{Sensorfusion,
  author={Yoon, Kyusang and Choi, Jaeho and Huh, Kunsoo},
  journal={IEEE Access}, 
  title={Adaptive Decentralized Sensor Fusion for Autonomous Vehicle: Estimating the Position of Surrounding Vehicles}, 
  year={Aug. 2023},
  volume={11},
  number={},
  pages={90999-91008},
  keywords={Sensors;Radar tracking;Sensor fusion;Laser radar;Cameras;Adaptation models;Sensor phenomena and characterization;Autonomous vehicles;Advanced driver assistance systems;Multimodal sensors;Perception;sensor fusion;autonomous vehicle;advanced driver assistance system;track-to-track fusion;interacting multiple model filter;multimodal learning},
  doi={10.1109/ACCESS.2023.3308152}}

@INPROCEEDINGS{V2Xbraking,
  author={Zimmermann, Jan and Llatser, Ignacio and Scherl, Michael and Wildschütte, Florian and Hofmann, Frank},
  booktitle={2024 IEEE 27th International Conference on Intelligent Transportation Systems (ITSC)}, 
  title={Benefit Evaluation of {V2X}-Enhanced Braking in View Obstructed Crossing Use Cases}, 
  year={Sep. 2024},
  volume={},
  number={},
  pages={1799-1806},
  keywords={Uncertainty;Simulation;Radar detection;Vehicle crash testing;Sensor systems;Brakes;Vehicle-to-everything;Standards;Accidents;Vehicles},
  doi={10.1109/ITSC58415.2024.10919921}}

\end{document}